\documentclass[10pt,final,twocolumn]{IEEEtran}
\usepackage{amsthm,hyperref}
\usepackage{mathrsfs}
\usepackage{supertabular}
\usepackage{cite}
\usepackage{color}
\usepackage{paralist}
\usepackage{graphics} 
\usepackage{latexsym,  amsfonts, amssymb,graphicx,array,tabularx}
\usepackage{subfigure}
\usepackage{caption}
\usepackage{dsfont}
\usepackage{bbm}
\usepackage{epsfig}
\usepackage{amsmath}
\usepackage[english]{babel}
\usepackage{multicol}
\usepackage{epic}
\usepackage{epstopdf}
\usepackage{mathtools}


\theoremstyle{theorem}

\newtheorem{theorem}{Theorem}
\newtheorem{lemma}{Lemma}

\newtheorem{remark}{Remark}

\newtheorem{assumption}{Assumption}
\newtheorem{problem}{Problem}

\makeatletter

\newcommand{\Rmnum}[1]{\expandafter\@slowromancap\romannumeral #1@}
\makeatother

\newcommand\addtag{\refstepcounter{equation}\tag{\theequation}}
\DeclareMathOperator*{\esssup}{ess\,sup}
\DeclareMathOperator*{\argmax}{arg\,max}
\DeclareMathOperator*{\minimize}{minimize}
\DeclareMathOperator*{\subjectto}{subject \quad to}

\newcommand{\smallO}[1]{o{\left(#1\right)}}
\newcommand{\defeq}{\vcentcolon=}

\begin{document}
\IEEEoverridecommandlockouts

\title{Quickest Change Detection \\in Adaptive Censoring Sensor Networks}

\author{Xiaoqiang $\text{Ren}^{\star}$,  Karl H. $\text{Johansson}^{\dag}$, Dawei $\text{Shi}^{\ddag}$ and Ling $\text{Shi}^{\star}$

\thanks{The work by X. Ren and L. Shi was supported by a HK RGC theme-based project T23-701/14N.
The work by K. H. Johansson was supported in part by the Knut and Alice Wallenberg Foundation, the Swedish Foundation for Strategic Research,  and the Swedish Research Council.
The work by D. Shi was supported by the National Natural Science Foundation of China under Grant 61503027.

\indent $\star$: Department of Electronic and Computer Engineering,
Hong Kong University of Science and Technology, Kowloon, Hong Kong. Emails: {xren,~eesling}@ust.hk.

\indent $\dag$: ACCESS Linnaeus Center, School of Electrical Engineering, Royal Institute of Technology, Stockholm, Sweden. Email: kallej@ee.kth.se.

\indent $\ddag$: State Key Laboratory of Intelligent Control and Decision of Complex Systems; School of Automation, Beijing Institute of Technology, Beijing, 100081, China. Email: daweishi@bit.edu.cn.
}}
\maketitle

\begin{abstract}
The problem of quickest change detection with communication rate constraints is studied. A network of wireless sensors with limited computation capability monitors the environment and sends observations to a fusion center via wireless channels. At an unknown time instant, the distributions of observations at all the sensor nodes change simultaneously. Due to limited energy, the sensors cannot transmit at all the time instants. The objective is to detect the change at the fusion center as quickly as possible, subject to constraints on false detection and average communication rate between the sensors and the fusion center. A minimax formulation is proposed. The cumulative sum (CuSum) algorithm is used at the fusion center and censoring strategies are used at the sensor nodes. The censoring strategies, which are adaptive to the CuSum statistic, are fed back by the fusion center. The sensors only send observations that fall into prescribed sets to the fusion center. This CuSum adaptive censoring (CuSum-AC) algorithm is proved to be an equalizer rule and to be globally asymptotically optimal for any positive communication rate constraint, as the average run length to false alarm goes to infinity. It is also shown, by numerical examples, that the CuSum-AC algorithm provides a suitable trade-off between the detection performance and the communication rate.

\medskip
Keywords: censoring, quickest change detection, minimax, CuSum, asymptotically optimal, adaptive, wireless sensor networks.

\end{abstract}

\section{Introduction}
\emph{Background and Motivations:} The goal of quickest change detection is to detect the abrupt change in stochastic processes as quickly as possible subject to certain constraints on false detection. This problem has a wide range of applications, such as habitat monitoring~\cite{mainwaring2002wireless}, quality control engineering \cite{lai1995sequential}, computer security \cite{tartakovsky2006novel} and cognitive radio networks \cite{lai2008quickest}.
In the classical quickest change detection formulation, the decision maker observes a sequence of observations $\{X_1, \ldots, X_k,\ldots\}$, the distribution of which changes at an unknown time instant $\nu$. The observations before the change $X_{1},\ldots,X_{\nu-1}$ are independent and identically distributed (i.i.d.), and the observations after the change $X_{\nu}, \ldots, X_{\infty}$ are also i.i.d. but with a different distribution.
The change event model distinguishes two problem formulations: the Bayesian formulation due to Shiryaev \cite{Shiryaev1963,shiryaev2007optimal} and the minimax formulation due to Lorden \cite{lorden1971procedures} and Pollak \cite{pollak1985optimal}.



The classical quickest change detection problem does not consider the cost of acquiring observations. It assumes that the decision maker can access observations at all the time instants freely. This is an issue for resource-limited applications, such as those using wireless sensor networks (WSNs). In the problem of quickest change detection with WSNs, observations are taken by one or multiple sensors, which communicate with the decision maker via wireless channels \cite{ Veeravalli2001, mei2011quickest,Geng2013}. The limited resources, which include limited energy for each battery-powered sensor and the limited communication bandwidth, naturally pose the constraint that the observations cannot be sent to the decision maker continuously. Thus, we consider the problem of quickest change detection with such constraints.

\emph{Related Literatures and Contributions:} Recently, there are several works on constrained quickest change detection with minimax formulations~\cite{banerjee2011generalized,banerjee2012data, Geng2013, banerjee2014data}. Two classes of characterizations of the cost acquiring observations were considered: the cost of sampling~\cite{banerjee2012data, Geng2013, banerjee2014data} and the cost of communication~\cite{mei2011quickest,banerjee2011generalized,banerjee2014data}. In these works, algorithms consisting of stopping times and sampling/transmission schedulers were proposed. By transmission (sampling) schedulers, when local sensors send their data to a fusion center (a decision maker samples) is determined. The communication constraint was studied from the perspective of quantization as well as the communication rate in~\cite{yilmaz2012cooperative} but with general sequential detection settings, where an interesting point was that only one bit of information was sent to the fusion center whenever a transmission occured.

In this paper, we only take the cost of communication into account. This is motivated by WSNs applications for which the energy cost of sampling is usually negligible compared with that of communication~\cite{akyildiz2002wireless,dargie2010fundamentals}. Thus, it is a reasonable assumption that the sensors can take observations at each time instant but with limited number of communications with the fusion center. Furthermore, as in~\cite{mei2011quickest,banerjee2011generalized,banerjee2014data}, we do not consider quantization errors of the data sent from local sensors to the fusion center.


The structure of the system considered in this paper is illustrated in Fig. \ref{Fig:blockdiagramnetwork}.  Observations are taken by $M$ sensors and are sent to a fusion center via wireless channels. Due to limited energy, the remote sensors cannot transmit at all the time instants. To make the best use of the limited resources, the sensors are assumed to adopt censoring strategies \cite{rago1996censoring}. Each of the sensors samples at each time instant, but only transmits informative samples. The censoring strategies are adaptive to the detection statistic available at the fusion center. When necessary, the fusion center tells the sensors about the censoring strategies to use via the feedback channels\footnote{The feedback transmissions may cause additional energy consumption at the sensor side. But one should note that in our algorithm, the feedback message is quite simple (see Remark~\ref{Remark:ComputationComple}) and the feedback transmissions are usually quite few (see Remark~\ref{Remark:feedbackTimes}). In particular, when CuSum-AC algorithm with $N=2$ levels is used, only one bit of information is needed when feedback occurs.}.

To deal with communication constraints, in the existing literature~\cite{mei2011quickest,banerjee2011generalized,banerjee2014data,banerjee2013data1}, CuSum-like algorithms (CuSum in~\cite{mei2011quickest,banerjee2011generalized}, a variant called DE-CuSum in~\cite{banerjee2014data,banerjee2013data1}) are run locally at the sensor nodes and \emph{the detection statistic of the algorithm} is sent to the fusion center only when it is above a certain threshold.
In this paper, a fundamentally different approach is adopted: \emph{the observations} instead of the detection statistic are censored and transmitted to the fusion center.  Compared to running CuSum-like algorithms, the online computation load required at the sensor side to censor the observations is reduced (see Remark~\ref{Remark:ComputationComple}). Another advantage is that our algorithm is an equalizer rule, which helps reduce off-line computation complexity (see Remark~\ref{Remark:EqualizerRule}).
Compared with decentralized settings in ~\cite{mei2011quickest,banerjee2011generalized,banerjee2014data,banerjee2013data1}, where the communication is unidirectional and remote sensors implement the censoring strategy in an
autonomous manner, our algorithm is centralized in the sense that the censoring strategies used at remote sensors are fed back by the fusion center. Evidently, the feedback transmission introduced complicates the system, although it occurs only occasionally and its message is rather simple. Our algorithm, however, is able to reduce system complexity by reducing the number of sensors required. More specifically, to achieve the same detection performance, fewer sensors are required compared with the decentralized counterparts. A similar idea can be found in~\cite{huang2006dynamic}. About the decentralized censoring strategies, we point out that to utilize the information of past observations available at each sensor, it is necessary to censor the detection statistic instead of current observations. Note that in this paper, we do not consider the cost of sampling, while if the sampling is ``energy consuming" (i.e., the energy consumed by sampling is comparable to that by communication), then the DE-CuSum algorithm running locally at the sensor nodes~\cite{banerjee2014data}, which skips sampling when necessary, is desirable.

In summary, the main contribution is that for the quickest change detection with communication rate constraints in the minimax formulation, a novel algorithm is proposed, which is the CuSum algorithm coupled with adaptive censoring strategies (CuSum-AC). The CuSum-AC algorithm is proven to be asymptotically optimal for any positive communication rate constraint and thus provides some insights into how one can utilize censoring strategies \emph{adaptively} to achieve the globally asymptotic optimality.


\setlength{\unitlength}{1.25mm}
\begin{figure}[tb]
\thicklines
\centering
\begin{picture}(70,23)(-3,-12)
\thicklines

{\scriptsize

\put(0,-9){\framebox(11,18)}

\put(0.5,0.2){${\rm Monitored}$}
\put(2.0,-1.5){${\rm target}$}

\put(11,7){\vector(1,0){7}}
\put(18,4){\framebox(10,6)}
\put(19.3,6.5){${\rm Sensor}\:1$}

\put(28,8){\line(1,0){14}}
\put(30.5,9.5){$\gamma_{\{1,k\}}X_{\{1,k\}}$}
\dashline{0.98}(28.1,6.5)(40.5,6.5)
\put(28.1,6.5){\vector(-1,0){0.1}}
\put(32,4.5){$\psi_{\{1,k\}}$}

\dashline{0.9}(40.5,6.5)(40.5,2)
\put(42,8){\line(0,-1){4.5}}
\put(42,3.5){\vector(1,0){5}}
\dashline{0.9}(40.5,2)(47,2)

\put(11,-7){\vector(1,0){7}}
\put(18,-10){\framebox(10,6)}
\put(18.3,-7.5){${\rm Sensor}\:M$}

\put(28,-8){\line(1,0){14}}
\put(30.5,-10){$\gamma_{\{M,k\}}X_{\{M,k\}}$}
\dashline{0.98}(28.1,-6.5)(40.5,-6.5)
\put(28.1,-6.5){\vector(-1,0){0.1}}
\put(32,-5.5){$\psi_{\{M,k\}}$}

\dashline{0.9}(40.5,-6.5)(40.5,-2)
\put(42,-8){\line(0,1){4.5}}
\put(42,-3.5){\vector(1,0){5}}
\dashline{0.9}(40.5,-2)(47,-2)

\put(47,-5){\framebox(7,10)}
\put(47.4,0.5){${\rm Fusion}$}
\put(47.6,-2){${\rm center}$}

\put(54,0){\vector(1,0){10}}
\put(55,1){${\rm Change?}$}

{\Large
\put(23,-1.5){$.$}
\put(23,0){$.$}
\put(23,1.5){$.$}
}

}


%
%
%
%
%
%
%
%
%
%
%
%
%
%

\end{picture}
 \caption{ The quickest change detection system in adaptive censoring sensor networks. Each sensor corresponds to the blue rectangle in Fig.~\ref{Fig:blockdiagram}.} \label{Fig:blockdiagramnetwork}
\end{figure}
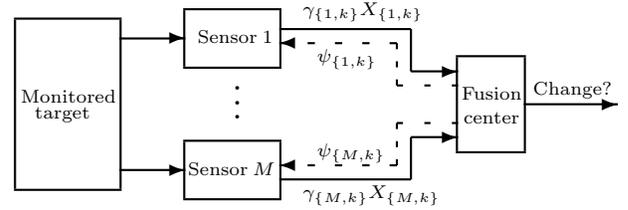

\emph{Organizations:} The remainder of this paper is organized as follows. The one sensor case is studied in Sections \ref{Section:problem setup}-\ref{Section:MainResults}. The mathematical formulation of the considered problem is given in Section \ref{Section:problem setup}. We present the CuSum-AC algorithm in Section \ref{Section:Algorithm}. The main results are given in Section \ref{Section:MainResults}. First we show that the CuSum-AC algorithm is an equalizer rule, i.e., the worst-worst case detection delay is attained whenever the change event happens. Then we prove that the CuSum-AC algorithm is globally asymptotically optimal for any positive communication rate constraint. We generalize the results obtained for the one sensor case to the multiple sensors scenario in Section \ref{Section:ExtensiontoMultipleSensors}. Numerical examples are given in Section \ref{Section:Numerical Example} to illustrate the main results. Some concluding remarks are given in the end. All the proofs are presented in Appendices.

\textit{Notations}: $\mathbb{N}$, $\mathbb{N}_{+}$, $\mathbb{R}$, $\mathbb{R}_{+}$ and $\mathbb{R}_{++}$ are the set of non-negative integers, positive integers, real numbers, non-negative real numbers and positive real numbers, respectively.
$k\in \mathbb{N}$ is the time index.
$\mathbf{1}_{A}$ represents the indicator function that takes value $1$ on the set $A$ and $0$ otherwise.
$\times$ stands for the Cartesian product. For $x\in\mathbb{R}$, $(x)^{+}=\max(0,x).$

\section{Problem Setup} \label{Section:problem setup}
For simplicity of presentation, first we consider the one sensor case; see Fig.~\ref{Fig:blockdiagram}. Then we extend the results to the sensor networks scenario in Section \ref{Section:ExtensiontoMultipleSensors}.
Consider the change detection system in Fig. \ref{Fig:blockdiagram}.
A sequence of observations, say $\{X_k\}_{k\in\mathbb{N}_{+}}$, about
the monitored environment are taken locally at the sensor.
Assume that $\nu$ is an unknown (but not random) time instant when a change event takes place. The instant may be $\infty$, corresponding to that the change never happens.
The observations at the sensor before $\nu$, $\{X_1,\ldots,X_{\nu-1}\}$, are i.i.d. with probability density function (pdf) $f_0$, and the observations from $\nu$ on, $X_{\nu},X_{\nu+1},\ldots$, are i.i.d. with pdf $f_1$.
Let $\mathbb{P}_{\nu}$ denote the probability measure when the change happens at $\nu$. If there is no change, we denote this measure by $\mathbb{P}_{\infty}$. The expectation $\mathbb{E}_{\nu}$ and $\mathbb{E}_{\infty}$ are defined accordingly.

\setlength{\unitlength}{1.25mm}
\begin{figure}[tp]
\thicklines
\centering
\begin{picture}
(75,20)(0,-12)
\thicklines

{\tiny
\put(0,2.5){${\rm Monitored}$}
\put(2,0.5){${\rm target}$}
\put(0,0){\vector(1,0){13}}

{\color{blue}\put(10,-10){\framebox(33,17)}}

{\footnotesize
\put(10.5,5){${\rm Sensor}$}
}

\put(13,-2){\framebox(12,4)}

\put(13.5,0.2){${\rm Measurement}$}
\put(14.3,-1.7){${\rm acquisition}$}
\put(26.5,1.5){$X_k$}

\put(25,0){\line(1,0){9}}

\put(30,-9){\framebox(10,3)}
\put(31.1,-8){${\rm Censoring}$}
\put(36,-4){$\gamma_k$}

\put(34,0){\line(1,1){3}}
\put(37,0){\vector(1,0){16}}
\put(53,-5){\framebox(7.5,10)}

\put(53.2,0.2){${\rm Decision}$}
\put(54.3,-1.8){${\rm maker}$}

\dashline{0.98}(45,-7.5)(40,-7.5)
\put(40,-7.5){\vector(-1,0){0.1}}
\dashline{0.91}(45,-7.5)(45,-3)
\dashline{0.91}(45,-3)(53,-3)
\put(47,-5){$\psi_k$}
\put(46,1.5){$\gamma_kX_k$}

\put(60.5,0){\vector(1,0){9}}

\put(61,1){${\rm Change?}$}

\put(35.5,-6){\vector(0,1){6.5}}

\put(27,-7.5){\vector(1,0){3}}
\put(27,0){\line(0,-1){7.5}}
}
\end{picture}
\caption{ The quickest change detection system with a sensor that adopts an adaptive censoring strategy.} \label{Fig:blockdiagram}
\end{figure}
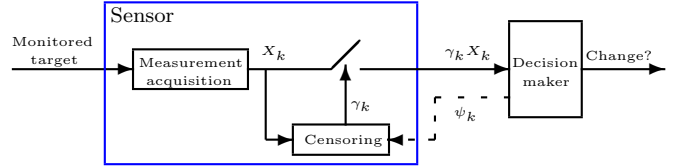

To characterize the behavior that the sensor cannot send the observation $X_k$ to the decision maker all the time, we introduce a binary variable $\gamma_k$ as
\begin{align} \label{Equation:IndicatorOneSensor}
    \gamma_k = \left\{
        \begin{array}{ll}
            1, & \text{if $X_k$ is sent to the decision maker},\\
            0, & \text{otherwise}.
        \end{array}
    \right.
\end{align}
Then the information pattern available for the decision maker at the time instant $k$ is given by
$\mathcal{I}_k=\big\{(1,\gamma_1,\gamma_1X_1),\ldots,(k,\gamma_k,\gamma_kX_k)\big\}$
with $\mathcal{I}_0=\{\emptyset\}$. A random variable $T\in\mathbb{N}_{+}$ is called a stopping time if $\{T=k\}\in\sigma(\mathcal{I}_{k})$, where $\sigma(\mathcal{I}_{k})$ is the smallest sigma-algebra of $\mathcal{I}_{k}$. A stopping time can be characterised by a stopping rule, which is a mechanism that decides whether or not to stop based on the available information.

To make the best use of the limited communication resources, the censoring strategy is implemented at the sensor node. We consider an adaptive censoring strategy, which varies with the information pattern. Specifically, the censoring strategy used at the sensor node at time instant $k$, which is denoted by $\psi_k$, is determined by the decision maker based on $\mathcal{I}_{k-1}$. When $\psi_k \neq \psi_{k-1}$, the decision maker sends $\psi_k$ to the sensor through the feedback channel. Since the sensor is assumed to have no memory and can thus only access $X_k$ at time $k$, the censoring strategy $\psi_k: \mathbb{R} \mapsto \{0,1\}$ has the form as
$\gamma_k=\psi_k(X_k).$
The censoring policy along the horizon is given by
$\Psi=(\psi_1,\ldots,\psi_{T}).$

The communication constraint is formulated as the limited communication rate \emph{before} the change event happens. It depends on the censoring policy $\Psi$ and is formalized as
\begin{align}
    r(\Psi)=\limsup_{n\to\infty}\frac{1}{n}\mathbb{E}_{\infty}\left[\sum_{k=1}^{n}\gamma_k \big | T \geq n\right]\leq \epsilon, \label{Equation:CommunicationConstraint}
\end{align}
where $0<\epsilon\leq 1$ is a design parameter.
By adjusting $\epsilon$, a tradeoff between communication resources and detection performance is obtained. Note that the post-change period (i.e., the detection delay) is usually quite small compared with the pre-change period, hence we only pose the communication constraint \emph{before} the change. The conditional expectation $\mathbb{E}_{\infty}[\cdot|T\geq n]$ thus is considered.  The asymptotic optimality result
of the paper does not hold if the total cost is considered (see Remark~\ref{Remark:TotalCost}).
A similar criterion called pre-change transmission cost is considered in \cite{banerjee2013data1}.

For the detection performance of the quickest change detection, there are two indices: the risk of false detection and the detection delay. Given $T$ and $\Psi$, the risk of false detection is characterized by the average run length to false alarm (ARLFA)
\begin{align*}
    g(T,\Psi)=\mathbb{E}_{\infty}\left[ T \right];
\end{align*}
cf., \cite{lorden1971procedures,tartakovsky2012third}.
Note that the reciprocal of the ARLFA is connected to the false alarm rate. When the ARLFA goes to infinity, the false alarm rate goes to zero.
The stopping time $T$ is related to $\mathcal{I}_k$, which is determined by the observation sequence $\{X_k\}$ and the censoring policy $\Psi$, so the ARLFA is related with $\Psi$. To highlight this dependence, we use $g(T,\Psi)$ in the above definition.

For the detection delay, we consider Lorden's worst-worst case detection delay~\cite{lorden1971procedures}\footnote{It is easy to see that the main result in this paper, i.e., asymptotical optimality of our algorithm, also holds when Pollak's criterion~\cite{pollak1985optimal} is considered.}, which is given by
\begin{align}
    d_{L}(T,\Psi) = \sup_{1\leq\nu<\infty} \big\{ \esssup_{\mathcal{I}_{\nu-1}} \mathbb{E}_{\nu}[(T-\nu+1)^+|\mathcal{I}_{\nu-1}] \big\}.
    \label{Equation:DetectionDelayLorden}
\end{align}
\begin{problem}  \label{Problem:Lorden}
    \begin{align*}
        \minimize_{T,\Psi}& \qquad d_L(T,\Psi),\\
        \subjectto& \qquad g(T,\Psi) \geq \zeta,   \addtag \label{Equantion:ProCons1}         \\
                  & \qquad r(\Psi)\leq \epsilon,    \addtag \label{Equantion:ProCons2}
    \end{align*}
where $\zeta\geq1$ is a given lower bound of the ARLFA.
\end{problem}
Note that for the classical formulation of the quickest change detection, the observations are assumed to be i.i.d. conditioned on the change event.
While since $\Psi$ is adaptive, the available observation sequence $\{\gamma_k, \gamma_kX_k\}$ are correlated across the time. To solve the above problems thus is quite challenging.

To avoid degenerate problems, we make the following assumption for the remainder of this paper.
\begin{assumption}  \label{Assumption:FiniteKLDivergence}
\begin{align*}
0<\mathbf{I}(f_{1}||f_0)<\infty, \quad
0<\mathbf{I}(f_0||f_1)<\infty,
\end{align*}
where
$\mathbf{I}(f_{1}||f_0) = \int_{\mathbb{R}} f_1(x)\ln\frac{f_1(x)}{f_0(x)} \mathrm{d}x$,
$\mathbf{I}(f_{0}||f_1) = \int_{\mathbb{R}} f_0(x)\ln\frac{f_0(x)}{f_1(x)} \mathrm{d}x$
are the Kullback--Leibler (K--L) divergences.
\end{assumption}



%
%
%
%

Our subsequent analysis utilizes the CuSum algorithm, which is stated as follows. Let constant $a$ be a given threshold and $\ell(X_k)=\ln\frac{f_1(X_k)}{f_0(X_k)}$ the log-likelihood ratio function.
The stopping time for the CuSum algorithm thus is computed as
\begin{align}
    T(a)=\inf\{k:c_k > a\}, \label{Equation:CuSumStoppingTime}
\end{align}
where $c_k$ is the detection statistic for the CuSum algorithm computed by
\[c_k = (c_{k-1} + \ell(X_k))^+\]
with $c_0 = 0$.
The CuSum algorithm is optimal for original Lorden's formulation when $a$ is chosen such that
$\mathbb{E}_{\infty}\left[ T(a) \right]=\zeta$~\cite{moustakides1986optimal, ritov1990decision}.
When there is no communication rate constraint, i.e., $\epsilon=1$, Problem~\ref{Problem:Lorden} is reduced to original Lorden's formulation.
We should remark that it is difficult to find strictly optimal algorithms for Problem \ref{Problem:Lorden}. We hence focus on asymptotically optimal solution. For simplicity, we use $T(a)$ to denote the CuSum algorithm with the threshold $a$ for the remainder of this paper.

\section{CuSum-AC Algorithm} \label{Section:Algorithm}
In this section, we present the proposed CuSum-AC algorithm, which is the CuSum algorithm coupled with a censoring policy that adaptively switches between different censoring strategies.
We say a CuSum-AC algorithm is with $N$ levels if the number of censoring strategies used is $N$.
For ease of presentation, we present the CuSum-AC algorithm with $N=2$ levels, and for the CuSum algorithm with $N>2$ levels, see~\eqref{Equation:NlevelsCensoring} and Remark~\ref{Remark:Nlevels}. In the remainder of this paper, if not particularly indicated, the CuSum-AC algorithm refers to the one with $N=2$ levels. The algorithm consists of three parts: how the detection statistic updates, what the adaptive censoring policy is, and when the algorithm stops and declares the change.

Let $a$ and $a_1$ be given thresholds with $a_1 < a$. The detection statistic $s_k$ are updated as follows:
\begin{align}
    \tilde{s}_k=&\max\{0,s_{k-1}+\ell^{\psi_k}(\gamma_k,X_k)\},\label{Equation:DetecStatisUpdateNLevels0}\\
s_k = &\left\{
        \begin{array}{ll}
             a_1, & \text{if}\quad s_{k-1}<a_1\:\text{and}\: \tilde{s}_k\geq a_1,\\
            \tilde{s}_k, & \text{otherwise},
        \end{array}
    \right. \label{Equation:2LevelsUpdate}
\end{align}
with initial value $s_0=0$. The quantity $\ell^{\psi_k}(\gamma_k,X_k)$ is the log-likelihood ratio function of the random variable $\gamma_kX_k$\footnote{Note that since a  censoring strategy is adopted, when $\gamma_k=0$, the decision maker still has a rough information about $X_k$.} under the censoring strategy $\psi_k$:
\begin{align*}
\ell^{\psi_k}(\gamma_k,X_k) &= \ln \frac{f_1(X_k)}{f_0(X_k)} (\gamma_k-0)\\
 &\qquad +\, \ln \frac{\mathbb{P}_{1}\{\gamma_k=0|\psi_k\}}{\mathbb{P}_{\infty}\{\gamma_k=0|\psi_k\}} (1-\gamma_k)
\end{align*}

The adaptive censoring policy is given by
\begin{align} \label{Equation:CensoringStrategyNlevels}
\psi_k = &\left\{
        \begin{array}{ll}
             \psi^{*}(1), & \text{if}\quad s_{k-1} \geq a_1,\\
            \psi^{*}(\epsilon_1), & \text{if} \quad s_{k-1} < a_1,
        \end{array}
    \right.
\end{align}
where $0<\epsilon_1 \leq 1$ and $\psi^{*}(\epsilon_1)$ is defined as follows.
Let $0<e\leq 1$, define
\begin{align} \label{Equation:StationaryCensoringPolicy}
    \psi^{*}(e)=\argmax_{\psi \in \mathcal{C}(e)} \mathbf{I}^{\psi}(f_1||f_0),
\end{align}
where
$\mathcal{C}(\epsilon)=\left\{\psi: \mathbb{P}_{\infty}\left\{\psi\left(X_k\right)=1\right\}=\epsilon\right\}$, and $\mathbf{I}^{\psi}(f_1||f_0)$ is the K--L divergence of the observations available at the decision maker under the censoring strategy $\psi$:
\begin{align*}
\mathbf{I}^{\psi}(f_1||f_0) &=\mathbb{E}_{1}\left[ \ell^{\psi}(\gamma_1,X_1)\right].
\end{align*}
Among the censoring strategies that have communication rate~$\epsilon$, the strategy $\psi^{*}(\epsilon)$ has the maximal post-censoring K--L divergence. In general, $\psi^{*}(\epsilon)$ does not have analytic expressions, but it is well known that $\psi^{*}(\epsilon)$ has a special structure: the likelihood ratio of the no-send region is \emph{a single interval}~\cite{rago1996censoring}. The upper and lower bounds of this single interval is obtained via numerical simulations.

The stopping time for the CuSum-AC algorithm is given by
\begin{align}  \label{Equation:StopTimeAC}
    T_{c}(a)=\inf\{k:s_k\geq a\}.
\end{align}
In the following, we use $T_c(x,y)$ to denote the stopping time as $T_c(a)$ for which the initial statistic is $s_0 = x$ and the threshold is $y$.

In summary, the CuSum-AC algorithm is illustrated in Fig.~\ref{Fig:CuSum-AC}. A few remarks on the algorithm are presented as follows.

Compared with the CuSum algorithm, two additional parameters $\epsilon_1, a_1$ are introduced for the CuSum-AC algorithm. As in the CuSum algorithm, the parameters involved in the CuSum-AC algorithm, i.e., $\epsilon_1, a_1$ together with $a$, can only be determined by numerical simulations. In practice, if the required ARLFA is large enough (i.e., the threshold $a$ is chosen large enough), the communication rate constraint and the ARLFA constraint can be met independently. More specifically, to meet the communication rate constraint, one may first fix a large enough $a$ (any value) and then obtain an appropriate pair $(a_1,\epsilon_1)$. Since given a pair $(\epsilon_1,a_1)$, when $a$ is large enough, the communication rate is almost a constant, one then may meet the ARLFA constraint by just adjusting $a$.

The detection statistic of the CuSum-AC algorithm is reset to switching threshold $a_1$ whenever it crosses $a_1$ from below. This facilitates the asymptotic optimality analysis of the CuSum-AC algorithm and makes the stopping time $T_{c}(a)$ of the CuSum-AC algorithm  an equalizer rule (the details of which are given in Section~\ref{Section:MainResults}).

We now elaborate on the adaptive censoring strategy. Note that by definition, for a fixed $0<e\leq 1$, $\psi^*(e)$ is the most informative in the sense that it achieves the largest post-censoring K--L divergence with communication rate $e$. While by adjusting $e$, one can trade off communication cost against information quality of $\psi^*(e)$. On the one hand, the larger $e$ is, the more information of the observations taken at the sensor node is conveyed to the decision maker under $\psi^*(e)$. On the other hand, if $e_1 > e_2$, $\psi^*(e_1)$ incurs more communication cost than $\psi^*(e_2)$ does. Intuitively, one tends to use a censoring strategy $\psi^*(e)$ with a larger $e$ when it is deemed ``more important". At each time $k$, our adaptive censoring strategy in~\eqref{Equation:CensoringStrategyNlevels} tends to use a censoring strategy $\psi^*(e)$ with a larger $e$ when $s_{k-1}$ is larger. This idea comes from the observation of the typical evolution of the CuSum algorithm as illustrated in Fig.~\ref{Fig:TimeSeriesCuSum}. The detection statistic $c_k$ goes up and down before it reaches the threshold. At most times, the detection statistic stays small.
Note that if the sojourn time when $c_k$ stays in one interval is large enough, the change of $c_k$ in that interval can be approximated using the \emph{statistical property} of the observations without knowing each observation. Let us take two extreme cases for example. Let $T_1$ and $T_2$ be the sojourn time when $c_k$ is in interval $1$ and interval $2$, respectively.
Suppose that $T_1$ is sufficiently large, while $T_2 = 1$.
Then by the renewal theorem~\cite{woodroofe1982nonlinear}, one knows that the change of $c_k$ in interval $1$ can be obtained by $\Delta_1(c_k)\approx T_1\mathbf{I}(f_1||f_0)$. This means that almost no information is lost for the decision maker even if no messages are sent by the sensor. However, for the case $T_2=1$, in order to make the change $\Delta_2(c_k)$ known to the decision maker, this single observation has to be sent to the decision maker, the communication rate of which is $1$.
Based on this notion, one extends the adaptive censoring strategy to $N>2$ levels as follows:
\begin{align} \label{Equation:NlevelsCensoring}
\psi_k = &\left\{
        \begin{array}{ll}
             \psi^{*}(1), & \text{if}\quad s_{k-1} \geq a_1,\\
            \psi^{*}(\epsilon_1), & \text{if}\quad a_2 \leq s_{k-1} < a_1,\\
            \quad\vdots &\qquad\qquad \vdots \\
            \psi^{*}(\epsilon_{N-2}), & \text{if}\quad a_{N-1} \leq s_{k-1} < a_{N-2},\\
            \psi^{*}(\epsilon_{N-1}), & \text{if}\quad s_{k-1} < a_{N-1},
        \end{array}
    \right.
\end{align}
with $0<\epsilon_{N-1} \leq \epsilon_{N-2} \leq \cdots \leq \epsilon_1 \leq 1 $ and $a_{N-1} < a_{N-2} < \cdots<a_1$.

\setlength{\unitlength}{1.25mm}
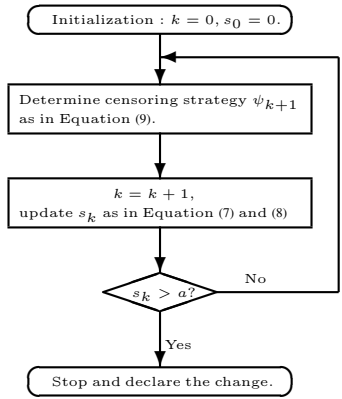
\begin{figure}[tp]
\thicklines
\centering
\begin{picture}
(40,43)(-20,-33)
\thicklines

{\tiny
\put(0,8){\oval(28,3)}
\put(-11.5,7.5){${\rm Initialization}: k=0, s_0=0.$ }

\put(0,6.5){\vector(0,-1){5.5}}
\put(-16,-4){\framebox(32,5)}
\put(0,-4){\vector(0,-1){5}}
\put(-15,-1){${\rm Determine\:censoring\:strategy}$ $\psi_{k+1}$}
\put(-15,-3){${\rm as\:in\:Equation~\eqref{Equation:CensoringStrategyNlevels}.}$}

\put(-16,-14){\framebox(32,5)}
\put(-5,-11){$k=k+1,$ }
\put(-15,-13){${\rm update}\:s_k\: {\rm as\:in\:Equation~\eqref{Equation:DetecStatisUpdateNLevels0}\:and\:\eqref{Equation:2LevelsUpdate} }$ }
\put(0,-14){\vector(0,-1){5}}

\put(-6,-21){\line(3,1){5.9}}
\put(6,-21){\line(-3,1){5.9}}
\put(-6,-21){\line(3,-1){5.9}}
\put(6,-21){\line(-3,-1){5.9}}
\put(-3,-21.5){$s_k>a?$ }
\put(6,-21){\line(1,0){13}}
\put(19,-21){\line(0,1){25}}
\put(19,4){\vector(-1,0){19}}
\put(9,-20){${\rm No}$}

\put(0,-23){\vector(0,-1){6}}
\put(0.5,-27){${\rm Yes}$}
\put(0,-30.5){\oval(28,3)}
\put(-11.5,-31){${\rm Stop\:and\:declare\: the\: change.}$ }
}
\end{picture}
 \caption{ CuSum-AC algorithm } \label{Fig:CuSum-AC}
\end{figure}

\begin{remark} \label{Remark:ComputationComple}
Now we discuss the practical implementation of the CuSum-AC algorithm with general $N$ levels and the online computation load at the sensor side. The parameters, i.e., $\psi^*(\epsilon_{N-1}),\ldots,\psi^*(\epsilon_{1}), a_{N-1},\ldots,a_1$ and $a$, are determined prior to the system run time. The censoring strategies $\psi^*(\epsilon_n), 1\leq n \leq N-1$ are stored in the sensor node\footnote{There is no need to store $\psi^*(1)$, under which no observations are censored at all. } and the feedback message from the fusion center is the strategy index~$n$ (together with $n=0$ representing $\psi^*(1)$). The feedback happens when $\psi_k \neq \psi_{k-1}$. Note that $\psi^*(\epsilon_n)$ has a special structure: the likelihood ratio of the no-send region is a single interval.
Hence, to store a censoring strategy $\psi^*(\epsilon_n)$, it suffices to store the corresponding lower and upper bounds of the likelihood ratio (or the observations in some special cases, see below).
The only computation task of the remote sensor is to implement the censoring strategies $\psi^*(\epsilon_n)$, the computational load of which is explained for the following two cases.
For general distributions $f_1$ and $f_0$, the sensor
first computes the likelihood ratio of $X_k$ and then compares it to the upper and lower bounds. Note that to run the CuSum-like algorithms locally, the remote sensor needs to further compute the logarithm of the likelihood ratio. Since comparisons have negligible computational load compared with that of computing a logarithm, the computational load at the remote senor for our algorithm is much lower.
If the distributions $f_1$ and $f_0$ are such that the likelihood ratio function is monotone, then a single interval of the likelihood ratio also implies a single interval of the observations. Then to implement $\psi^*(\epsilon_n)$, the sensor just needs to compare the observations directly to the corresponding lower and upper bounds of the observations.
The family of distributions that have monotone likelihood ratio property is quite large, e.g.,  exponential, Binomial, Poisson and normal distributions with known variances. Although, in most of these cases, the computational load of log-likelihood ratio is also low-- one does not need to actually compute the logarithm but only elementary computations are required. Still, compared with running CuSum-like algorithms, the computational load at the remote sensor is considerably reduced in our algorithm, since comparisons are much simpler  to compute than multiplications, especially when the observations take on real values.
\end{remark}


\begin{figure}
  \centering
  \includegraphics[scale=0.4]{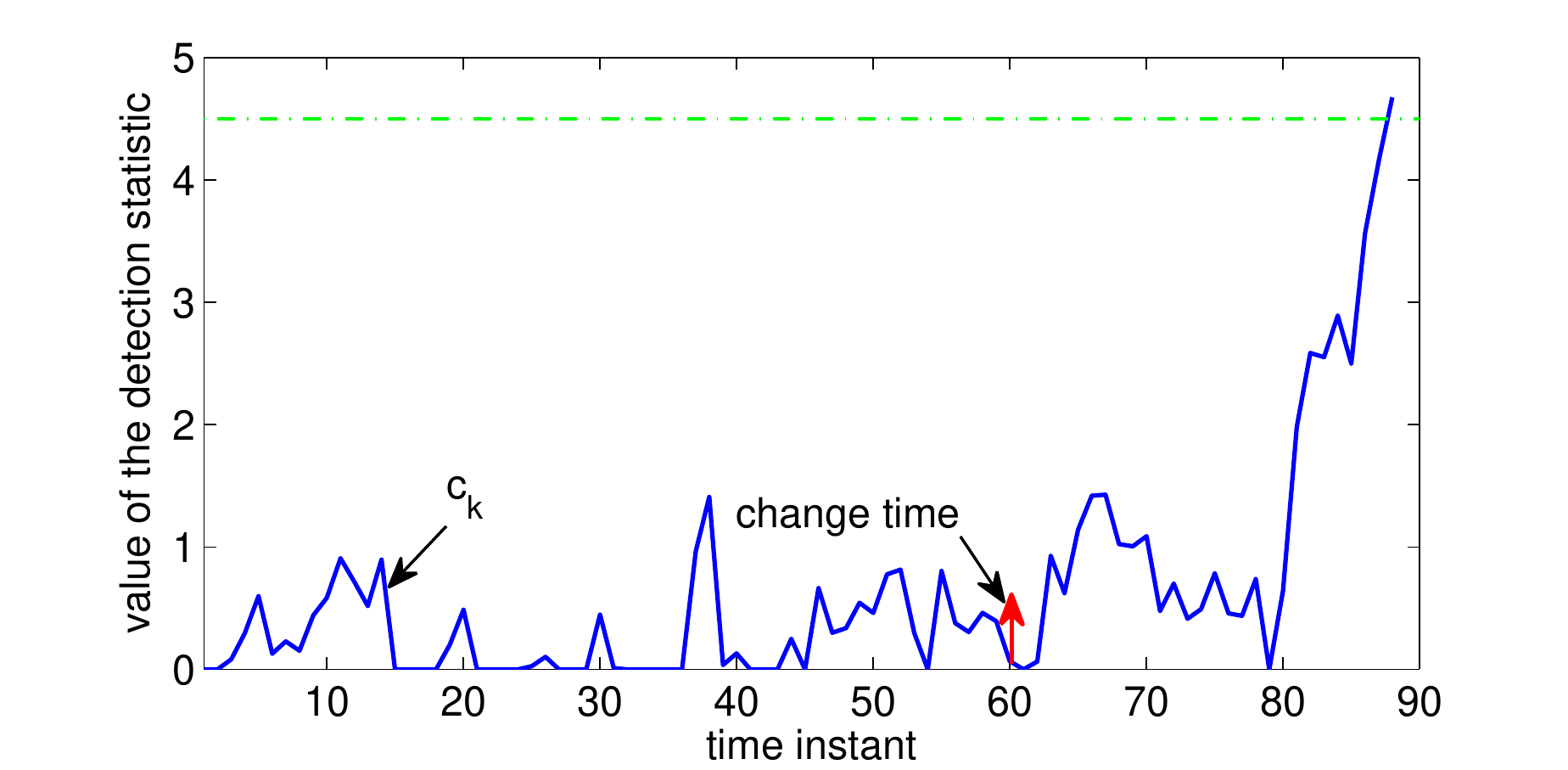}\vspace{-2mm}
  \caption{ Typical evolution of the CuSum algorithm. There is a mean shift in Gaussian noise, where the parameters used are as follows: before the change $\mathbb{P}_{\infty}: X_k \sim \mathcal{N}(0,1)$, after the change $\mathbb{P}_{1}: X_k \sim \mathcal{N}(0.5,1)$, change time $\nu=60$ and threshold $a=4.5$.} \label{Fig:TimeSeriesCuSum}
  \vspace{-4mm}
\end{figure}

\section{Performance Analysis} \label{Section:MainResults}
%

In this section, we first show that the $T_c(a)$ is an equalizer rule, i.e., the detection delay $d_{L}(T_c(a),\Psi)$ is attained for any change time $\nu$. We then prove $T_c(a)$ is asymptotically optimal for any communication rate constraint.

\subsection{Supporting Definitions} \label{section:supportingDefinition}
The classical performance analysis of the CuSum algorithm interprets the CuSum algorithm as a sequence of two-sided sequential probability ratio tests (SPRTs) \cite{poor2009quickest}.  This technique is also used for our analysis of the CuSum-AC algorithm. Intuitively, the CuSum-AC algorithm is a sequence of two-sided ($0$ and $a$) SPRTs with switching modes (original or censored) of observations. Each time the detection statistic crosses $a_1$ from below, it is reset to be $a_1$. This behavior is mathematically characterized as follows.

Define a stopping time of an SPRT with a starting point $0\leq z <a-a_1$ as a variable:
\begin{align*}
    \eta(z)= & \inf \left\{n: z+\sum_{k=1}^{n}\ell(X_k) \notin [0,a-a_1]\right\}.
\end{align*}
Note that $\eta(0)$ can be viewed as the first time that the detection statistic jumps out from $[a_1,a]$ with the initial point $a_1$. It either crosses the threshold $a$ or returns to $[0,a_1]$ and starts a test with censored observations. We denote by $\hat{s}_{\eta(z)}$ the detection statistic at the time instant $\eta(z)$ bounded below by zero:
\begin{align*}
\hat{s}_{\eta(z)}=\left(z+\sum_{k=1}^{\eta(z)}\ell(X_k)+a_1\right)^{+}.
\end{align*}

Define a detection statistic $\breve{s}_k(z)$, which is updated in the same manner with that in the CuSum algorithm but with an initial point $0\leq z < a_1$ and censored observations. The details are as follows:
\begin{align*}
  \breve{s}_k(z)&=\left( \breve{s}_{k-1}(z)+\ell^{\psi^{*}(\epsilon_1)}(\gamma_k,X_k) \right)^{+}, \addtag
  \label{Equation:CensoredCuSum}\\
  \breve{s}_0(z)&=z.
\end{align*}
Based on $\breve{s}_k(z)$, we define a stopping time by
\begin{align*}
\phi(z)=&\inf\left\{ k: \breve{s}_k(z) \geq a_1 \right\}.
\end{align*}
As the CuSum-AC algorithm starts at $0$, $\phi(0)$ can be interpreted as the first time that it reaches $a_1$ and switches the observation mode from the censored one to the original one.

Let
\begin{align}
\Phi = \eta(0)+ \phi\left(\hat{s}_{\eta(0)}\right)\mathbf{1}_{\{\hat{s}_{\eta(0)}<a_1\}}. \label{Equation:PhiDistribution}
\end{align}
The CuSum-AC algorithm can be interpreted as a sequence of SPRTs with stopping times being of two distributions. Specifically, the CuSum-AC algorithm starts with the stopping time distributed as $\phi(0)$, and after the time instant $\phi(0)$, it is a sequence of SPRTs with stopping times i.i.d. distributed as~$\Phi$.

\subsection{Equalizer Rule and Asymptotic Optimality}
\begin{theorem} \label{Theorem:Equalizer}
The stopping time $T_c(a)$ is an equalizer rule for Problem \ref{Problem:Lorden}, i.e.,
\begin{align*}
d_{L}(T_c(a),\Psi) = \esssup_{\mathcal{I}_{\nu-1}} \mathbb{E}_{\nu}[(T_c(a)-\nu+1)^+|\mathcal{I}_{\nu-1}], \: \forall \nu\geq 1.
\end{align*}
\end{theorem}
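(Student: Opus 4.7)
\emph{Plan.} The plan is to show that for every $\nu\geq 1$,
\[
\esssup_{\mathcal{I}_{\nu-1}}\mathbb{E}_\nu\bigl[(T_c(a)-\nu+1)^+\,\big|\,\mathcal{I}_{\nu-1}\bigr]=\mathbb{E}_1[T_c(0,a)],
\]
so this quantity is independent of $\nu$ and therefore coincides with $d_L(T_c(a),\Psi)$. I would split the argument into (a) a Markov reduction of the conditional delay to a function of $s_{\nu-1}$ alone, (b) monotonicity of that function in $s_{\nu-1}$, and (c) attainability of $s_{\nu-1}=0$ under $\mathbb{P}_\infty$.

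For (a), the update rules~\eqref{Equation:DetecStatisUpdateNLevels0}--\eqref{Equation:CensoringStrategyNlevels} make $\{s_k\}$ a Markov chain whose kernel depends only on $(f_0,f_1,\epsilon_1,a_1)$; since $X_\nu,X_{\nu+1},\ldots$ are i.i.d.\ $\sim f_1$ and independent of $\mathcal{I}_{\nu-1}$, the conditional law of $(T_c(a)-\nu+1)^+$ given $\mathcal{I}_{\nu-1}$ depends on $\mathcal{I}_{\nu-1}$ only through $s_{\nu-1}$, giving $\mathbb{E}_\nu[(T_c(a)-\nu+1)^+\mid\mathcal{I}_{\nu-1}]=\mathbb{E}_1[T_c(s_{\nu-1},a)]=:G(s_{\nu-1})$. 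For (b), I would split $[0,a)$ at $a_1$. On $[0,a_1]$ the reset rule together with the strong Markov property at $\phi(x)$ (defined in Section~\ref{section:supportingDefinition}) gives the clean renewal decomposition $T_c(x,a)\stackrel{d}{=}\phi(x)+T'_c(a_1,a)$, where $T'_c(a_1,a)$ is an independent copy; coupling the censored walks defining $\phi(x)$ through the common observation sequence shows that $\phi(x)$ is pointwise non-increasing in $x$, hence so is $G$ on $[0,a_1]$. On $[a_1,a)$ I set $z:=x-a_1$ and use the exit decomposition
\[
T_c(x,a)=\eta(z)+\mathbf{1}_{\hat s_{\eta(z)}<a_1}\bigl(\phi(\hat s_{\eta(z)})+T'_c(a_1,a)\bigr),
\]
and prove that the exit time $\eta(z)$, the down-exit probability $\mathbb{P}_1\{\hat s_{\eta(z)}<a_1\}$, and the conditional law of $\hat s_{\eta(z)}$ given a down-exit are all monotone in $z\in[0,a-a_1]$, via a monotone coupling of the unconstrained walks $\{z+\sum_{i=1}^{n}\ell(X_i)\}$ and the positive post-change drift $\mathbf{I}(f_1\|f_0)>0$ furnished by Assumption~\ref{Assumption:FiniteKLDivergence}. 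For (c), the same assumption yields $\mathbb{E}_0[\ell(X_k)]=-\mathbf{I}(f_0\|f_1)<0$, so the event that every increment is non-positive and the truncation to $[0,\infty)$ in~\eqref{Equation:DetecStatisUpdateNLevels0} pins $s_k\equiv 0$ for all $k\leq\nu-1$ has strictly positive $\mathbb{P}_\infty$-probability; hence $\{s_{\nu-1}=0\}$ lies in the essential support of $\mathcal{I}_{\nu-1}$, and combining (a)--(c) produces the displayed identity.

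The main obstacle is the monotonicity of $G$ on $[a_1,a)$: the obvious observation-coupling breaks down because a single negative log-likelihood can drop one trajectory below $a_1$ while leaving the other above, after which the two processes invoke different censoring strategies and decouple. The remedy is to transport the monotonicity claim to the unconstrained random walk $\{z+\sum_{i=1}^{n}\ell(X_i)\}$ through the exit decomposition, where a monotone coupling in the starting point $z$ is routine; the delicate piece is then to combine the three $z$-dependent quantities (exit time, down-exit probability, and overshoot distribution) so that the weighted sum defining $G$ is genuinely non-increasing, which uses the positive post-change drift in an essential way.
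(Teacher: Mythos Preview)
Your steps (a) and (c), and the $[0,a_1]$ half of (b), are exactly what the paper does: reduce to $G(z):=\mathbb{E}_1[T_c(z,a)]$, use the renewal decomposition $T_c(z,a)\stackrel{d}{=}\phi(z)+T_c'(a_1,a)$ for $z<a_1$ with a sample-path coupling to get $\mathbb{E}_1[\phi(z)]\leq\mathbb{E}_1[\phi(0)]$, and verify $\mathbb{P}_\infty\{s_{\nu-1}=0\}>0$.

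The gap is on $[a_1,a)$. Your claim that ``the exit time $\eta(z)$, the down-exit probability, and the conditional law of $\hat s_{\eta(z)}$ given a down-exit are all monotone in $z$'' does not survive the monotone coupling you invoke. Under the shared-increment coupling, the down-exit probability \emph{is} monotone, but $\eta(z)$ is not pointwise ordered: when the lower trajectory exits below while the upper one is still inside, the lower trajectory has the \emph{smaller} exit time, so $z\mapsto\eta(z)$ is neither almost-surely increasing nor decreasing. The expected exit time from a two-sided interval is typically unimodal in the starting point, not monotone, even with positive drift. Likewise, the conditional overshoot law given a down-exit is not obviously stochastically ordered in $z$. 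So the ``combine the three quantities'' step is not just delicate bookkeeping; as stated, one of the three monotonicity inputs is false, and you have not supplied an argument for why the weighted combination is nevertheless non-increasing.

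The paper avoids this entirely. It does \emph{not} prove that $G$ is monotone on $[a_1,a)$; it only proves the weaker statement $G(z)\leq G(0)$ for every $z$, which is all that is needed. For $z\in[a_1,a)$ it writes
\[
G(z)=p(z)\,t^\uparrow+(1-p(z))\bigl(t_1^\downarrow+t_2^\downarrow\bigr),
\]
where $p(z)=\mathbb{P}_1\{\mathcal{Z}^\uparrow_z\}$, $t^\uparrow=\mathbb{E}_1[T_c(z,a)\mid\mathcal{Z}^\uparrow_z]$, $t_1^\downarrow=\mathbb{E}_1[\eta(z)\mid\mathcal{Z}^\downarrow_z]$, and $t_2^\downarrow$ is the residual delay after the first down-crossing. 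The key step is an inequality taken from Lemma~5 of \cite{banerjee2012data}: for any $t^*\geq\mathbb{E}_1[\tilde T(a)]$ (the CuSum reflected at $a_1$),
\[
p(z)\,t^\uparrow+(1-p(z))\bigl(t_1^\downarrow+t^*\bigr)\leq t^*.
\]
One then shows $t_2^\downarrow\leq G(0)$ (via the $[0,a_1]$ case already established) and $G(0)\geq\mathbb{E}_1[\tilde T(a)]$, and plugs $t^*=G(0)$ into the displayed inequality to conclude $G(z)\leq G(0)$. The cited lemma is precisely the structural fact your proposal is missing; it packages the interaction between $p(z)$, $t^\uparrow$, and $t_1^\downarrow$ without requiring any of them to be individually monotone.
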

\begin{remark} \label{Remark:EqualizerRule}
In general, the parameters used for the algorithm (i.e., $\epsilon_1, a_1$ and $a$)
can only be obtained by numerically simulating the detection performance (i.e., the delay and the ARLFA).
The above theorem means that the change time does not affect the value of $d_{L}(T,\Psi)$. For simplicity, we can just let $\nu=1$ to simulate the delay.
\end{remark}


We now focus on asymptotic optimality of $T_c(a)$.
Before presenting the main theorem, we first present the supporting lemma about the communication rate of the CuSum-AC algorithm as follows.
\begin{lemma} \label{Proposition:ComRateSatisfied}
Given any finite $a_1>0$ and $0<\epsilon\leq 1$, there exists
a non-empty set $\mathcal{E}(a_1,\epsilon)$ such that
when $\psi^{*}(\epsilon_1)$ with $\epsilon_1\in\mathcal{E}(a_1,\epsilon)$ is used, the communication rate constraint is uniformly satisfied for any $a>a_1$ (including $+\infty$). In other words, given any finite $a_1>0$ and $0<\epsilon\leq 1$, there exists a censoring strategy $\Psi$ as in~\eqref{Equation:CensoringStrategyNlevels} with $\epsilon_1\in\mathcal{E}(a_1,\epsilon)$, such that
\begin{align*}
r(\Psi)\leq \epsilon, \quad \forall a\in(a_1,+\infty].
\end{align*}
\end{lemma}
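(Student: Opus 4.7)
The plan is to exploit the two-level structure of the adaptive policy $\psi_k$ in \eqref{Equation:CensoringStrategyNlevels}. By construction of $\psi^{*}(\epsilon_1)$ and $\psi^{*}(1)$, the conditional mean $\mathbb{E}_\infty[\gamma_k\mid\mathcal{F}_{k-1}]$ equals $\epsilon_1$ on $\{s_{k-1}<a_1\}$ and equals $1$ on $\{s_{k-1}\geq a_1\}$, so $r(\Psi)$ admits the decomposition $\epsilon_1+(1-\epsilon_1)\,\bar p(a_1,a;\epsilon_1)$, where $\bar p(a_1,a;\epsilon_1)\defeq\limsup_{n\to\infty}\frac{1}{n}\mathbb{E}_\infty[\sum_{k=1}^{n}\mathbf{1}_{\{s_{k-1}\geq a_1\}}\mid T_c(a)\geq n]$ is the limiting conditional fraction of time the statistic spends in the upper region. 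The task then reduces to exhibiting some $\epsilon_1<\epsilon$ for which $\bar p(a_1,a;\epsilon_1)\leq(\epsilon-\epsilon_1)/(1-\epsilon_1)$ uniformly in $a\in(a_1,+\infty]$; the collection of such $\epsilon_1$ will constitute $\mathcal{E}(a_1,\epsilon)$. When $\epsilon=1$ the statement is vacuous, so I would focus on $\epsilon<1$.

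To bound $\bar p$, I would invoke the renewal decomposition of Section \ref{section:supportingDefinition}. The statistic alternates between a censored excursion in $[0,a_1)$ of length $\phi(\cdot)$ and an SPRT excursion in $[a_1,a]$ of length $\eta(\cdot)$; successful cycles (those in which $s$ returns below $a_1$ rather than hitting $a$) are i.i.d.\ with length distributed as $\Phi$ in \eqref{Equation:PhiDistribution}. Applying the renewal reward theorem to the reward $\mathbf{1}_{\{s_{k-1}\geq a_1\}}$ gives $\bar p(a_1,a;\epsilon_1)\leq \mathbb{E}_\infty[\eta(0)]/\mathbb{E}_\infty[\Phi]$. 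Under $\mathbb{P}_\infty$, the uncensored log-likelihood $\ell$ has drift $\mathbb{E}_\infty[\ell(X_1)]=-\mathbf{I}(f_0||f_1)<0$ by Assumption \ref{Assumption:FiniteKLDivergence}, and $\eta(0)$ is stochastically dominated by the one-sided hitting time of $(-\infty,0)$, so Wald's identity yields a constant upper bound on $\mathbb{E}_\infty[\eta(0)]$ that is independent of $a$. For the censored excursion, $\psi^{*}(\epsilon_1)$ transmits with probability $\epsilon_1$ under $\mathbb{P}_\infty$ and each transmitted increment is bounded in expectation by the uncensored K--L divergences, so the walk $\breve s_k$ requires on average at least order $1/\epsilon_1$ steps to traverse the fixed positive distance $a_1$; hence $\mathbb{E}_\infty[\phi(z)]\to\infty$ as $\epsilon_1\to 0$, uniformly in $z\in[0,a_1)$. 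Combining, $\mathbb{E}_\infty[\eta(0)]/\mathbb{E}_\infty[\Phi]\to 0$ as $\epsilon_1\to 0$ uniformly in $a>a_1$, which produces the required nonempty $\mathcal{E}(a_1,\epsilon)$.

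The principal technical difficulty is justifying the renewal reward theorem under the survival conditioning $\{T_c(a)\geq n\}$, because the i.i.d.\ cycle structure interacts with the stopping event in a way that the classical regenerative argument does not directly cover. I would address this either by analyzing the quasi-stationary distribution $\mu^{*}$ of the substochastic kernel of $s_k$ on $[0,a)$, showing that the conditional law of $s_{k-1}$ given $T_c(a)\geq n$ converges to $\mu^{*}$ as $n\to\infty$, or by bypassing the conditioning via the unconditional Wald identity $\mathbb{E}_\infty[M_{T_c(a)\wedge n}]=\epsilon_1\mathbb{E}_\infty[T_c(a)\wedge n]+(1-\epsilon_1)\mathbb{E}_\infty[|\{k\leq T_c(a)\wedge n:s_{k-1}\geq a_1\}|]$ for $M_{T_c(a)\wedge n}=\sum_{k=1}^{T_c(a)\wedge n}\gamma_k$, and then controlling the resulting ratio through tail bounds on $T_c(a)$. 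A secondary concern is that the lower bound on $\mathbb{E}_\infty[\phi(z)]$ must remain meaningful when $z$ approaches $a_1$, which I would handle by invoking bounded-overshoot estimates for the exit point $\hat s_{\eta(0)}$ on the event $\{\hat s_{\eta(0)}<a_1\}$ so that $z$ stays strictly below $a_1$ by at least a positive amount in expectation.
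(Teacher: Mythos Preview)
Your plan follows essentially the same renewal/regenerative decomposition as the paper: write the rate as a weighted average of $\epsilon_1$ (in the censored region) and $1$ (in the upper region), bound the fraction of time spent above $a_1$ by a ratio of expected excursion lengths, show the upper excursion has finite $\mathbb{E}_\infty$-expectation independent of $a$ (negative drift of $\ell$ under $\mathbb{P}_\infty$), and show the lower excursion length diverges as $\epsilon_1\to 0$.

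The main place where the paper's argument is simpler than what you propose is precisely the survival conditioning $\{T_c(a)\geq n\}$ that you flag as the chief difficulty. Rather than passing to a quasi-stationary distribution or a Wald-type identity, the paper does two things. First, it drops the initial censored segment $[1,\phi(0)]$; since the rate on that segment is exactly $\epsilon_1$, removing it can only increase the limiting average, giving an upper bound. Second, and this is the key point, it observes that on the event $\{T_c(a)\geq n\}$ every completed upper excursion by time $n$ has exited through the lower boundary, i.e.\ satisfies $\hat s_{\eta(0)}<a_1$. Hence the relevant cycle lengths are the \emph{conditional} ones, $\mathbb{E}_\infty[\eta(0)\mid\hat s_{\eta(0)}<a_1]$ and $\mathbb{E}_\infty[\phi(\hat s_{\eta(0)})\mid\hat s_{\eta(0)}<a_1]$, and these are genuinely i.i.d.\ across cycles, so alternating-renewal theory applies directly without any quasi-stationary machinery. (In particular you should not use the unconditional $\Phi$ of~\eqref{Equation:PhiDistribution} here; the paper explicitly notes that $\Phi$ is the wrong object because it allows termination.) The uniformity in $a$ then comes from the monotonicity of $\mathbb{E}_\infty[\eta(0)\mid\hat s_{\eta(0)}<a_1]$ in $a$ and its finite limit $T_{a_1}^\infty$ (Woodroofe, Corollary~2.4), which replaces your one-sided hitting-time bound. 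For the divergence of $\mathbb{E}_\infty[\phi(z)]$ as $\epsilon_1\to 0$ the paper invokes Woodroofe, Corollary~2.6 (drift of the censored log-likelihood tends to $0$), which is cleaner than your $1/\epsilon_1$ heuristic and sidesteps the need for a separate bounded-overshoot argument for $z$ near $a_1$, since the distribution of $\hat s_{\eta(0)}\mid\hat s_{\eta(0)}<a_1$ does not depend on $\epsilon_1$.
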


The asymptotic optimality analysis involves the scenario where the threshold $a\to\infty$. The above theorem enables us to study the asymptotic performance of the CuSum-AC algorithm without worrying whether the communication constraint will be violated for some $a$.

%

Given $a_1$ and $\epsilon$, we define a set $\mathcal{E}^{*}(a_1,\epsilon)$ as
\begin{align}
\mathcal{E}^{*}(a_1,\epsilon)= \mathcal{E}(a_1,\epsilon) \cap \mathcal{E}'(a_1,\epsilon),   \label{Equation:OptimalEpsilonSetDefinition}
\end{align}
where the set $\mathcal{E}'(a_1,\epsilon)$ is given by
\begin{align*}
&\mathcal{E}'(a_1,\epsilon) \\
=& \{\epsilon_1>0: \mathbb{E}_{\infty}\left[ \phi\left(\hat{s}_{\eta(0)}\right) |\hat{s}_{\eta(0)} < a_1 \right] \geq \mathbb{E}_{\infty}\left[ T(a_1)\right],\, \forall a>a_1 \}.
\end{align*}
 Recall that $T(a_1)$ is the stopping time for the CuSum algorithm with $a_1$ as the threshold. Under Assumption \ref{Assumption:FiniteKLDivergence}, using the standard performance analysis technique for the CuSum algorithm (e.g., P.142 of \cite{poor2009quickest}), one sees that $\mathbb{E}_{\infty}\left[ T(a_1)\right]$ is finite for any finite $a_1$. Using the same analysis for $\mathcal{E}(a_1,\epsilon)$ (see the proof of Lemma~\ref{Proposition:ComRateSatisfied}), one sees that $\mathcal{E}'(a_1,\epsilon)$ is not empty for any $a_1$ and $\epsilon$.  Furthermore, when $\epsilon_1$ is small enough, it must belong to both $\mathcal{E}(a_1,\epsilon)$ and $\mathcal{E}'(a_1,\epsilon)$, then $\mathcal{E}^{*}(a_1,\epsilon)$
is a non-empty set for any $a_1$ and $\epsilon$.

We are now ready to present the second main theorem.

\begin{theorem}  \label{Theorem:AsymOpt2Levels}
For any $\epsilon > 0$, when $a=\ln\zeta$, $\epsilon_1\in \mathcal{E}^{*}(a_1,\epsilon)$ with any $0<a_1<a$ are used, the CuSum-AC algorithm satisfies the ARFLA constraint~\eqref{Equantion:ProCons1} and communication rate constraint~\eqref{Equantion:ProCons2}. Furthermore, the CuSum-AC algorithm is asymptotically ($\zeta\to\infty$) optimal for Problem~\ref{Problem:Lorden}, i.e., as $\zeta\to\infty$,
\begin{align*}
d_L(T_c(\ln\zeta),\Psi) = \frac{\ln \zeta}{\mathbf{I}(f_1||f_0)}(  1+ \smallO 1 ).
\end{align*}

\end{theorem}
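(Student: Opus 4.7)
The plan is to verify the two constraints and then sandwich the worst-case delay between matching asymptotic bounds. By Lemma~\ref{Proposition:ComRateSatisfied} the communication-rate bound is immediate since $\mathcal{E}^*(a_1,\epsilon)\subseteq \mathcal{E}(a_1,\epsilon)$. For the ARLFA I would use the SPRT decomposition of Section~\ref{section:supportingDefinition}: starting from $s_0=0$, the pre-stop path of $s_k$ consists of an initial censored-mode excursion $\phi(0)$ up to $a_1$, followed by i.i.d.\ rounds each distributed as $\Phi$ in~\eqref{Equation:PhiDistribution}. Let $p=\mathbb{P}_\infty(\hat s_{\eta(0)}\geq a)$ be the probability that a round exits above $a$; Wald's likelihood-ratio identity for the uncensored SPRT on $[0,a-a_1]$ yields $p\leq e^{-(a-a_1)}$, so the number of rounds until a success is geometric with mean at least $e^{a-a_1}$. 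Each failed round has expected length at least $\mathbb{E}_\infty[\phi(\hat s_{\eta(0)})\mid \hat s_{\eta(0)}<a_1]$, which by the defining property of $\mathcal{E}'(a_1,\epsilon)$ dominates $\mathbb{E}_\infty[T(a_1)]$. Multiplying and invoking the classical asymptotics for plain CuSum at threshold $a_1$ gives $\mathbb{E}_\infty[T_c(\ln\zeta)]\geq \zeta(1+o(1))$, meeting~\eqref{Equantion:ProCons1}.

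For the upper bound on delay, Theorem~\ref{Theorem:Equalizer} reduces the problem to computing $\mathbb{E}_1[T_c(a)]$ with $s_0=0$. Split $T_c(a)=\tau_1+\tau_2$, where $\tau_1=\phi(0)$ is the first crossing of $a_1$ in censored mode and $\tau_2$ is the remaining time to cross $a$. Under $\mathbb{P}_1$ the censored log-likelihood has strictly positive drift $\mathbf{I}^{\psi^*(\epsilon_1)}(f_1\|f_0)>0$, so Wald's identity gives $\mathbb{E}_1[\tau_1]=O(1)$ uniformly in $a$. After $\tau_1$ the process sits at $a_1$ under strategy $\psi^*(1)$ with raw increments of positive drift $\mathbf{I}(f_1\|f_0)$; occasional drops back into the censored region happen with probability vanishing geometrically in $a-a_1$ and each contributes only $O(1)$ extra expected time. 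A standard Wald/nonlinear-renewal argument applied to the uncensored random walk from $a_1$ past $a$ then yields $\mathbb{E}_1[\tau_2]=(a-a_1)/\mathbf{I}(f_1\|f_0)\cdot(1+o(1))$. At $a=\ln\zeta$ with $a_1$ fixed this produces $d_L(T_c(\ln\zeta),\Psi)\leq \ln\zeta/\mathbf{I}(f_1\|f_0)\cdot(1+o(1))$.

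For the matching lower bound, every admissible $(T,\Psi)$ operates on a filtration no finer than $\sigma(X_1,\ldots,X_k)$, so Lorden's universal bound for unconstrained quickest detection, $d_L(T)\geq \ln \mathbb{E}_\infty[T]/\mathbf{I}(f_1\|f_0)\cdot(1+o(1))$, still applies; combined with the ARLFA constraint this gives $d_L(T,\Psi)\geq \ln\zeta/\mathbf{I}(f_1\|f_0)\cdot(1+o(1))$. The two bounds match, proving asymptotic optimality.

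The main obstacle I anticipate is the ARLFA step: the reset rule $s_k\leftarrow a_1$ breaks an exact renewal identity with the pre-reset overshoot, so one must control the overshoot contribution to Wald's inequality on $p$ and propagate the $o(1)$ error through the geometric trial count uniformly in $a$. The definition of $\mathcal{E}'(a_1,\epsilon)$ is designed precisely to sidestep a direct overshoot analysis by reducing the per-round time bound to the plain-CuSum asymptotics at threshold $a_1$, which are classical; the remaining subtlety is ensuring that the hypothesis $\epsilon_1\in\mathcal{E}^*(a_1,\epsilon)$ also suffices to handle the ``successful exit'' round (where $\eta(0)$ contributes directly), which can be done by a similar Wald estimate on the uncensored SPRT.
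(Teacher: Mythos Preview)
Your overall plan---verify constraints, upper-bound $\mathbb{E}_1[T_c(a)]$ via a censored/uncensored split, and invoke Lorden's universal lower bound---matches the paper's three-step argument. One step, however, contains a concrete error. In the delay upper bound you assert that under $\mathbb{P}_1$ ``drops back into the censored region happen with probability vanishing geometrically in $a-a_1$.'' This is false: starting from $a_1$, the SPRT $\eta(0)$ on $[0,a-a_1]$ exits below with probability $\mathbb{P}_1(\hat s_{\eta(0)}<a_1)$, and as $a\to\infty$ this tends to $\mathbb{P}_1\bigl(\inf_{n\geq1}\sum_{k=1}^n\ell(X_k)<0\bigr)$, a strictly positive constant whenever $\mathbb{P}_1(\ell(X_1)<0)>0$. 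The correct reason the censored excursions are negligible is not that each one is rare, but that the \emph{number} of drops is geometric with success probability bounded away from zero uniformly in $a$, hence has $O(1)$ mean, and each drop-and-return costs at most $\mathbb{E}_1[\phi(0)]<\infty$. The paper implements exactly this by writing $T_c(a)=T_1+T_2$ with $T_1$ the total time above $a_1$ and $T_2$ the total time below, bounding $\mathbb{E}_1[T_2]\leq \mathbb{E}_1[\phi(0)]/(1-\mathbb{P}_1(\hat s_{\eta(0)}<a_1))=O(1)$ and $\mathbb{E}_1[T_1]\leq \mathbb{E}_1[T(a-a_1)]$ via the reset rule, so that $\mathbb{E}_1[T_2]/\mathbb{E}_1[T_1]\to0$.

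Your ARLFA argument takes a genuinely different route from the paper. You bound the per-round success probability by $p\leq e^{-(a-a_1)}$ via Wald's likelihood-ratio inequality and multiply by the per-failed-round time $\mathbb{E}_\infty[\phi(\hat s_{\eta(0)})\mid\hat s_{\eta(0)}<a_1]\geq\mathbb{E}_\infty[T(a_1)]\geq e^{a_1}$ from the $\mathcal{E}'$ condition; as written this gives $\mathbb{E}_\infty[T_c(a)]\geq\zeta(1+o(1))$, which suffices for asymptotic optimality but not literally for~\eqref{Equantion:ProCons1} at finite $\zeta$. (Adding the initial segment $\mathbb{E}_\infty[\phi(0)]\geq\mathbb{E}_\infty[T(a_1)]$---which follows from $\mathcal{E}'$ and monotonicity of $\phi(\cdot)$ in its starting point---closes that gap and yields exactly $\geq e^a$.) The paper instead introduces an auxiliary reset-CuSum $\hat T(a)$ and shows by sample-path coupling and Wald's identity that $\mathbb{E}_\infty[T_c(a)]\geq\mathbb{E}_\infty[\hat T(a)]\geq\mathbb{E}_\infty[T(a)]\geq e^a$, obtaining the exact constraint and the comparison to unconstrained CuSum in one stroke. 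Your direct Wald-on-$p$ approach is a legitimate and somewhat more elementary alternative once the initial $\phi(0)$ term is included; the paper's comparison approach has the advantage that it avoids any asymptotic slack and makes the role of $\mathcal{E}'(a_1,\epsilon)$ transparent.
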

\begin{remark} \label{Remark:TotalCost}
The globally (for any communication rate constraint $\epsilon > 0$) asymptotic optimality of the Cusum-AC algorithm stated in Theorem~\ref{Theorem:AsymOpt2Levels} relies on the following two factors: (i) the asymmetric behavior of the detection statistic for the CuSum-AC algorithm $s_k$ on pre-change and post-change hypotheses, (ii) the fact that the communication rate defined in~\eqref{Equation:CommunicationConstraint} is merely on the pre-change hypothesis. On the one hand, under $\mathbb{P}_{\infty}$, the expected duration of $s_k$ being above $a_1$ is  finite even when $a$ is infinite (i.e., $T_{a_1}^{\infty}$ defined in~\eqref{Equation:DurationAbove} is finite), then one can choose $\epsilon_1$ (equivalent to $\psi^{*}(\epsilon_1)$) to make the communication rate (ARLFA) of the CuSum-AC algorithm arbitrarily small (large). On the other hand, under $\mathbb{P}_{1}$, the expected duration of $s_k$ being below $a_1$ is finite for any $\epsilon_1>0$, while the expected duration of $s_k$ being above $a_1$ goes to infinity when $a\to\infty$.
\end{remark}




\begin{remark} \label{Remark:Nlevels}
Note that the general $N>2$ levels cases provide more degrees of design freedom (including $a_{N-1},\ldots,a_1$ and $\epsilon_{N-1},\ldots,\epsilon_{1}$) than the two levels case (including $a_1$ and $\epsilon_1$). Following exactly the  same arguments for Theorem~\ref{Theorem:AsymOpt2Levels}, one concludes that the CuSum-AC algorithm with $N>2$ levels is asymptotically ($\zeta\to\infty$) optimal. Though the asymptotic optimality can be achieved for any censoring levels, better detection performance should be expected when $N$ increases if $\zeta$ takes moderate values.
\end{remark}

\begin{remark} \label{Remark:feedbackTimes}
We remark that the feedback transmissions needed for the CuSum algorithm are in general quite few. In particular, the average number of feedback transmission under $\mathbb{P}_{\infty}$ is $\mathbb{E}_{\infty}[N_F]=\frac{2}{1-\mathbb{P}_{\infty}\{\hat{s}_{\eta(0)} < a_1\}}$.
From~\eqref{Equation:RandomStartTVCRepresentation}, the ratio of the average feedback transmission times to the ARLFA is $\frac{\mathbb{E}_{\infty}[N_F]}{\mathbb{E}_{\infty}[T_c(a)]}\leq\frac{2}{\mathbb{E}_{\infty}\left[  \eta(0) \right]
  + \mathbb{E}_{\infty}\left[ \phi\left(\hat{s}_{\eta(0)}\right) |\hat{s}_{\eta(0)} < a_1 \right]\mathbb{P}_{\infty}\{\hat{s}_{\eta(0)} < a_1\}}$, which is usually negligible since the term $\mathbb{E}_{\infty}\left[ \phi\left(\hat{s}_{\eta(0)}\right) |\hat{s}_{\eta(0)} < a_1 \right]\mathbb{P}_{\infty}\{\hat{s}_{\eta(0)} < a_1\}$ is in general quite large.
\end{remark}

\section{Extension to Sensor Networks} \label{Section:ExtensiontoMultipleSensors}
In this section, first we modify the considered problem to sensor networks. Then we generalize the CuSum-AC algorithm presented in Section \ref{Section:Algorithm} and the results obtained in Section \ref{Section:MainResults} to this case.
\subsection{Problem Formulation}
The system is illustrated in Fig. \ref{Fig:blockdiagramnetwork}. Let $\mathcal{M}=\{1,\ldots,M\}$.
As in the one sensor case, it assumed that at sensor~$m$, $\{X_{\{m,1\}},\ldots,X_{\{m,\nu-1\}}\}$ are i.i.d. with pdf $f_{\{0,m\}}$ and $\{X_{\{m,\nu\}},\ldots\}$
are i.i.d. with pdf $f_{\{1,m\}}$.
As in \cite{Veeravalli2001, tartakovsky2008asymptotically}, it is assumed that the change event affects all the sensors simultaneously at $\nu$ and the observations are independent across the sensors, conditioned on the change point.

Like $\gamma_k$ in \eqref{Equation:IndicatorOneSensor}, let $\gamma_{\{m,k\}}$ be indicator whether or not the sensor $m$ sends its observation $X_{\{m,k\}}$ to the fusion center. Let $\psi_{\{m,k\}}$ be the censoring strategy used at the sensor $m$ at the time instant $k$, i.e.,
\[\gamma_{\{m,k\}} = \psi_{\{m,k\}}(X_{\{m,k\}}).\]
Let $\psi^{\mathcal{M}}_k = \{\psi_{\{1,k\}},\ldots,\psi_{\{M,k\}}\}$ be the censoring strategies used at all the sensor nodes at time $k$ and $\Psi^{\mathcal{M}}$ be the censoring policy along the horizon, i.e., $\Psi^{\mathcal{M}}=\{\psi^{\mathcal{M}}_1,\ldots,\psi^{\mathcal{M}}_T\}$.

The average communication rate before the change event happens for the network is defined by
\begin{align} \label{Equation:ComRateConstNetwork}
    r(\Psi^{\mathcal{M}})=\limsup_{n\to\infty}\frac{1}{nM}\mathbb{E}_{\infty}\left[\sum_{k=1}^{n}\sum_{m=1}^{M}\gamma_{\{m,k\}} \big | T \geq n\right].
\end{align}
In \cite{banerjee2014data}, the authors posed communication rate constraint for each channel in the multi-channel setting (the affected subset of the sensors is unknown). Since the change event affects all the sensors simultaneously in our case, we instead use the average communication rate of the whole network \eqref{Equation:ComRateConstNetwork}. Given $T$ and $\Psi^{\mathcal{M}}$, the ARLFA is defined in the same way as in the one sensor case, i.e.,
\begin{align*}
    g(T,\Psi^{\mathcal{M}})=\mathbb{E}_{\infty}\left[ T \right].
\end{align*}
Let
\begin{align*}
    \mathcal{I}^{m}_k=\big\{&(1,\gamma_{\{m,1\}},\gamma_{\{m,1\}}X_{\{m,1\}}),\ldots,\\
    &\qquad \qquad (k,\gamma_{\{m,k\}},\gamma_{\{m,k\}}X_{\{m,k\}})\big\}
\end{align*}
and $\mathcal{I}^{\mathcal{M}}_k =\{\mathcal{I}^{1}_k,\ldots,\mathcal{I}^{M}_k\}$. Then the Lorden's  detection delay is defined by
\begin{align*}
    d_{L}(T,\Psi^{\mathcal{M}}) = \sup_{1\leq\nu<\infty} \big\{ \esssup_{\mathcal{I}^{\mathcal{M}}_{\nu-1}} \mathbb{E}_{\nu}[(T-\nu+1)^+|\mathcal{I}^{\mathcal{M}}_{\nu-1}] \big\}.
\end{align*}
Then the problem we are interested in is as follow:
\begin{problem}  \label{Problem:LordenNetwork}
    \begin{align*}
        \minimize_{T,\Psi^{\mathcal{M}}}& \qquad d_{L}(T,\Psi^{\mathcal{M}}),\\
        \subjectto& \qquad g(T,\Psi^{\mathcal{M}}) \geq \zeta, \addtag \label{Equantion:ProCons1Network} \\
                  & \qquad r(\Psi^{\mathcal{M}})\leq \epsilon, \addtag \label{Equantion:ProCons2Network}
    \end{align*}
where $\zeta\geq1$ is a given lower bound of the ARLFA.
\end{problem}

\subsection{CuSum-AC Algorithm for Multiple Sensors Case}
We only present and focus on the CuSum-AC algorithm with $N=2$ levels for sensor networks; $N>2$ levels can be generalized as in the one sensor case.
Let $a$ and $a_1$ be two thresholds. The stopping time is computed as
\begin{align}
    T_{c}^{\mathcal{M}}(a)=\inf\{k:s_k^{\mathcal{M}}\geq a\},
\end{align}
where the detection statistic $s_k^{\mathcal{M}}$ is updated as follows:
\begin{align*}
    \tilde{s}_k^{\mathcal{M}}=&\max\{0,s_{k-1}^{\mathcal{M}}+\sum_{m=1}^{M}\ell^{\psi_{\{m,k\}}}(\gamma_{\{m,k\}},X_{\{m,k\}})\}, \\
    s_k^{\mathcal{M}} = &\left\{
        \begin{array}{ll}
             a_1, & \text{if}\quad s_{k-1}^{\mathcal{M}}<a_1\:\text{and}\: \tilde{s}_k^{\mathcal{M}}\geq a_1,\\
            \tilde{s}_k^{\mathcal{M}}, & \text{otherwise},
        \end{array}
    \right.  \\
    s_0^{\mathcal{M}} = & 0,
\end{align*}
and the censoring strategies are given by
\begin{align}
\psi_{\{m,k\}} = &\left\{
        \begin{array}{ll}
             \psi^{*}(1), & \text{if}\quad s_{k-1}^{\mathcal{M}} \geq a_1,\\
            \psi^{*}(\epsilon_{\{m,1\}}), & \text{otherwise},\\
        \end{array}
    \right.
\end{align}
with $0<\epsilon_{\{m,1\}}\leq1.$
Note that the censoring strategies used at all the sensor nodes are switched simultaneously, which are adaptive to the detection statistic available at the fusion center. This helps reduce the times of feedback and the feedback message can be broadcasted to all the sensor nodes by the fusion center.

Let $X^{\mathcal{M}}_k = \{X_{\{1,k\}},\ldots,X_{\{M,k\}}\}$ and $\epsilon^{\mathcal{M}}_1 = \{\epsilon_{\{1,1\}},\ldots,\epsilon_{\{M,1\}}\}$. Note that $X^{\mathcal{M}}_k$ and $\psi^{\mathcal{M}}_k$ can be regarded as the ``vector version" of $X_k$ and $\psi_k$ in one sensor case, respectively. The CuSum-AC algorithm for the multiple sensors case is equivalent to its counterpart in one sensor case working with $X^{\mathcal{M}}_k$ and $\psi^{\mathcal{M}}_k$. Thus the following theorems are straightforward.
\begin{theorem} \label{Theorem:EqualizerNetwork}
The stopping time $T_{c}^{\mathcal{M}}(a)$ is an equalizer rule for Problem \ref{Problem:LordenNetwork}, i.e., for any $\nu\geq 1$,
\begin{align*}
d_{L}(T_{c}^{\mathcal{M}}(a),\Psi^{\mathcal{M}}) = \esssup \mathbb{E}_{\nu}[(T_{c}^{\mathcal{M}}(a)-\nu+1)^+|\mathcal{I}^{\mathcal{M}}_{\nu-1}].
\end{align*}
\end{theorem}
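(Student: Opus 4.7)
The plan is to reduce Theorem~\ref{Theorem:EqualizerNetwork} to Theorem~\ref{Theorem:Equalizer} by exploiting the structural equivalence noted in the paragraph immediately preceding the statement, namely that the multi-sensor CuSum-AC algorithm is a verbatim copy of the one-sensor CuSum-AC algorithm running on the vector observation $X^{\mathcal{M}}_k$ under the vector censoring strategy $\psi^{\mathcal{M}}_k$. The core verification needed to make this reduction rigorous is that the scalar update of $s_k^{\mathcal{M}}$ can indeed be recast as the one-sensor update driven by a single log-likelihood ratio increment.

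First I would verify the log-likelihood factorization. Because observations are independent across sensors conditional on the change point and the per-sensor censoring maps $\psi_{\{m,k\}}$ act only on $X_{\{m,k\}}$, the joint censored observation $(\gamma^{\mathcal{M}}_k,\gamma^{\mathcal{M}}_k X^{\mathcal{M}}_k)$ has a product pdf under both $\mathbb{P}_1$ and $\mathbb{P}_\infty$. Consequently
\begin{align*}
\ell^{\psi^{\mathcal{M}}_k}\!\bigl(\gamma^{\mathcal{M}}_k, X^{\mathcal{M}}_k\bigr) = \sum_{m=1}^{M} \ell^{\psi_{\{m,k\}}}\!\bigl(\gamma_{\{m,k\}}, X_{\{m,k\}}\bigr),
\end{align*}
which is precisely the increment used in the update of $\tilde{s}_k^{\mathcal{M}}$. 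Hence if we define an artificial single-sensor problem with observation $Y_k:=X^{\mathcal{M}}_k$, censoring rule $\psi_k:=\psi^{\mathcal{M}}_k$ and information pattern $\mathcal{I}_k:=\mathcal{I}^{\mathcal{M}}_k$, the recursion for $s_k^{\mathcal{M}}$ coincides with \eqref{Equation:DetecStatisUpdateNLevels0}--\eqref{Equation:2LevelsUpdate}.

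Second I would verify that the adaptive censoring rule also matches one-sensor form. In the multi-sensor algorithm, the threshold-crossing decision uses only $s_{k-1}^{\mathcal{M}}$ and simultaneously assigns to every sensor either $\psi^*(1)$ or $\psi^*(\epsilon_{\{m,1\}})$. Thus $\psi^{\mathcal{M}}_k$ takes only the two product-form values $\bigotimes_{m=1}^M\psi^*(1)$ and $\bigotimes_{m=1}^M\psi^*(\epsilon_{\{m,1\}})$, switching exactly according to \eqref{Equation:CensoringStrategyNlevels}. The stopping rule $T_c^{\mathcal{M}}(a)=\inf\{k: s_k^{\mathcal{M}}\geq a\}$ is also of the form \eqref{Equation:StopTimeAC} in the reinterpreted variables. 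Since the essential supremum over $\mathcal{I}^{\mathcal{M}}_{\nu-1}$ is the essential supremum over the sigma-algebra generated by the vectorized information pattern, every quantity appearing in the claim matches the corresponding quantity in Theorem~\ref{Theorem:Equalizer} applied to $(Y_k,\psi_k,\mathcal{I}_k)$.

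Once both mappings are in place, applying Theorem~\ref{Theorem:Equalizer} to the artificial single-sensor problem directly yields the desired equalizer identity for every $\nu\geq 1$. The main obstacle is really only bookkeeping: one has to be careful that the product structure of $\psi^{\mathcal{M}}_k$ together with the conditional independence across sensors is exactly what is needed to make the sum of per-sensor log-likelihood increments a legitimate single log-likelihood ratio for the vectorized system, so that no hidden dependence on $\nu$ is introduced by the multi-sensor architecture. Since the reset mechanism at $a_1$ and the two-level adaptive rule are applied to the scalar $s_k^{\mathcal{M}}$ in precisely the same way as in the one-sensor case, the worst-case analysis over pre-change information used in the proof of Theorem~\ref{Theorem:Equalizer} transfers without modification.
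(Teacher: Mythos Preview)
Your proposal is correct and follows exactly the route the paper takes: the paper does not give a separate proof of Theorem~\ref{Theorem:EqualizerNetwork} but simply observes, in the paragraph preceding it, that the multi-sensor CuSum-AC algorithm is the one-sensor algorithm applied to the vector observation $X^{\mathcal{M}}_k$ with vector censoring $\psi^{\mathcal{M}}_k$, and declares the result ``straightforward''. Your write-up merely fleshes out this reduction with the log-likelihood factorization and the two-level censoring bookkeeping, which is precisely the verification the paper leaves implicit.
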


Define $\mathscr{E}_{*}^{\mathcal{M}}(\epsilon,a_1)$ as the counterpart of $\mathcal{E}^{*}(a_1,\epsilon)$. To see that $\mathscr{E}_{*}^{\mathcal{M}}(\epsilon)$ is not empty, one can pose an additional constraint that $\epsilon_{\{1,1\}}=\cdots=\epsilon_{\{M,1\}}$. Then the arguments follow straightforwardly from that of $\mathcal{E}^{*}(a_1,\epsilon)$.

\begin{theorem}  \label{Theorem:AsymOptNlevelsNetwork}
For any $\epsilon >0$, when $a=\ln\zeta$, $\epsilon^{\mathcal{M}}_1\in\mathscr{E}_{*}^{\mathcal{M}}(\epsilon,a_1)$ with any $0<a_1<a$ are used, the CuSum-AC algorithm satisfies the ARFLA constraint~\eqref{Equantion:ProCons1Network} and communication rate constraint~\eqref{Equantion:ProCons2Network}. Furthermore, the CuSum-AC algorithm is asymptotically optimal for Problem~\ref{Problem:LordenNetwork}, i.e., as $\zeta\to\infty$,
\begin{align*}
d_L(T_{c}^{\mathcal{M}}(\ln\zeta),\Psi^{\mathcal{M}}) = \frac{\ln \zeta}{\mathbf{I}(f_{\{1,m\}}||f_{\{0,m\}})}(  1+ \smallO 1 ).
\end{align*}
\end{theorem}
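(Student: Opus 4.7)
The plan is to reduce Theorem~\ref{Theorem:AsymOptNlevelsNetwork} to the one-sensor result Theorem~\ref{Theorem:AsymOpt2Levels} by vectorizing observations across sensors. Since observations are independent across sensors conditioned on the change point, the joint pdf of $X^{\mathcal{M}}_k$ is $\prod_{m=1}^M f_{\{i,m\}}$ for $i \in \{0,1\}$, and the associated K--L divergence decomposes additively as $\mathbf{I}^{\mathcal{M}} \defeq \sum_{m=1}^M \mathbf{I}(f_{\{1,m\}}\|f_{\{0,m\}})$. Moreover, because each sensor censors only its own observation, the log-likelihood ratio of the censored joint observation under the product strategy $\psi^{\mathcal{M}}_k$ is exactly $\sum_{m=1}^M \ell^{\psi_{\{m,k\}}}(\gamma_{\{m,k\}},X_{\{m,k\}})$, which is precisely the increment used in the update of $s^{\mathcal{M}}_k$. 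Thus $T^{\mathcal{M}}_c(a)$ together with $\Psi^{\mathcal{M}}$ is an instance of the one-sensor CuSum-AC algorithm applied to the vector observation $X^{\mathcal{M}}_k$ with product pre- and post-change pdfs and with product censoring strategy $\psi^{*,\mathcal{M}}(\epsilon^{\mathcal{M}}_1) = \prod_{m}\psi^{*}(\epsilon_{\{m,1\}})$, which is the vector analogue of $\psi^{*}(\epsilon_1)$.

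Next, I would check that the two constraints in Problem~\ref{Problem:LordenNetwork} transfer to those of Problem~\ref{Problem:Lorden} under this reduction. The ARLFA constraint is formally identical. The network communication constraint \eqref{Equation:ComRateConstNetwork} is the sensor-average of per-sensor rates, so by restricting attention to the symmetric slice $\epsilon_{\{1,1\}} = \cdots = \epsilon_{\{M,1\}} = \epsilon_1$ and applying Lemma~\ref{Proposition:ComRateSatisfied} to each sensor independently, one obtains $r(\Psi^{\mathcal{M}}) \leq \epsilon$ uniformly in $a > a_1$. The same argument that established non-emptiness of $\mathcal{E}^{*}(a_1,\epsilon)$ before Theorem~\ref{Theorem:AsymOpt2Levels} now yields non-emptiness of $\mathscr{E}_{*}^{\mathcal{M}}(\epsilon,a_1)$.

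With the reduction validated, the renewal-theoretic decomposition from Section~\ref{section:supportingDefinition} (the SPRT-based stopping times $\eta(\cdot)$, $\phi(\cdot)$, and the distribution $\Phi$ of a renewal epoch) transfers verbatim to the vectorized chain, since the increments $\sum_m \ell^{\psi_{\{m,k\}}}(\gamma_{\{m,k\}},X_{\{m,k\}})$ remain i.i.d.\ within each regime and have the same asymmetric drift properties under $\mathbb{P}_\infty$ and $\mathbb{P}_1$. Theorem~\ref{Theorem:AsymOpt2Levels}, applied with the vectorized K--L divergence $\mathbf{I}^{\mathcal{M}}$ in place of $\mathbf{I}(f_1\|f_0)$, then yields the desired asymptotic
\begin{align*}
d_L(T^{\mathcal{M}}_c(\ln\zeta),\Psi^{\mathcal{M}}) = \frac{\ln\zeta}{\mathbf{I}^{\mathcal{M}}}(1+\smallO{1})
\end{align*}
as $\zeta \to \infty$, matching the theorem statement (with the denominator understood as the sum of sensor-wise divergences).

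The main obstacle is bookkeeping rather than new ideas: I need to check carefully that the product form of the censoring strategy preserves the ``single-interval'' likelihood-ratio structure of $\psi^{*}(\cdot)$ that drives Lemma~\ref{Proposition:ComRateSatisfied}, and that the equalizer property in Theorem~\ref{Theorem:EqualizerNetwork} together with the renewal decomposition combines cleanly enough that the asymptotic lower bound (from the classical Lorden/Moustakides information-theoretic argument for the vectorized problem) meets the upper bound supplied by the vectorized CuSum-AC analysis. Once the symmetric-slice reduction $\epsilon_{\{m,1\}}\equiv\epsilon_1$ is used, no new estimates beyond those of Theorem~\ref{Theorem:AsymOpt2Levels} are needed.
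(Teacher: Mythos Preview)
Your proposal is correct and follows essentially the same approach as the paper: vectorize the observations across sensors, observe that the multi-sensor CuSum-AC algorithm is literally the one-sensor CuSum-AC algorithm applied to $X^{\mathcal{M}}_k$ with product densities and product censoring strategy, use the symmetric slice $\epsilon_{\{1,1\}}=\cdots=\epsilon_{\{M,1\}}$ to establish non-emptiness of $\mathscr{E}_{*}^{\mathcal{M}}(\epsilon,a_1)$, and then invoke Theorem~\ref{Theorem:AsymOpt2Levels}. Your residual worry about preserving the ``single-interval'' structure is unnecessary, since the proof of Lemma~\ref{Proposition:ComRateSatisfied} (and hence of Theorem~\ref{Theorem:AsymOpt2Levels}) never uses that structure---it only needs that as $\epsilon_1\to 0$ the post-censoring drift $\mathbb{E}_\infty[\ell^{\psi^*(\epsilon_1)}]$ tends to zero, which holds coordinatewise and hence for the sum.
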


\section{Numerical Examples} \label{Section:Numerical Example}
For simulations, we consider the problem of mean shift detection in Gaussian noise. It is assumed that $M=3$ identical sensors are deployed and the pre-change and post-change distributions are $f_{\{0,m\}}\thicksim\mathcal{N}(0,1)$ and $f_{\{1,m\}}\thicksim\mathcal{N}(0.5,1),$ respectively.
For simplicity, in each example, the sensors use an identical censoring strategy.

%

\emph{ Example 1.}   The asymptotic optimality of the CuSum-AC algorithm is examined. We compare the detection performance of the CuSum-AC algorithm with that of the CuSum algorithm (which is the optimal one when there is no communication rate constraint). With different ARLFA's (sufficiently large), the detection delays (i.e., $\mathbb{E}_{1} \left[  T\right]$) of these algorithms  are simulated.
For the CuSum-AC algorithm, two communication rate constraints are considered, i.e., $\epsilon = 0.7$ or $\epsilon = 0.4$.
Note that given a communication rate and ARLFA constraint, there may exist multiple admissible combinations of the parameters (i.e., $a,a_1,\epsilon_1$). To alleviate the computation burden, we set $\epsilon_1 = 0.63$ and $a_1 = 0.78$ for the case $\epsilon =0.7$ and $\epsilon_1 = 0.27$ and $a_1 = 0.79$ for the case $\epsilon =0.4$. The value of the threshold $a$ is determined to make the communication rate constraint to be satisfied equally. Since given $a_1$ and $\epsilon_1$, the communication rate is \emph{not} strictly monotonic with $a$, multiple $a$'s (which have different ARLFA's) can be found. In fact, given $a_1$ and $\epsilon_1$, the communication rate remains the same when $a$ varies if $a$ is sufficiently large. The simulation results are given in Fig. \ref{Fig:asymptotic}. It can be seen that as the ARLFA increases, the difference between the delay of the CuSum-AC algorithm (with communication rate either $\epsilon =0.7$ or $\epsilon = 0.4$) remains approximately constant. This verifies the asymptotic optimality, since the difference will be negligible when the ARLFA goes to infinity. Furthermore, we can see that for the CuSum-AC algorithm with the same ARLFA, the one which has the smaller communication rate (i.e., $\epsilon =0.4$) has the larger detection delay. This is consistent with our intuition that better detection performance can be expected when more communication resources are used. We also note that when the communication rate is $0.4$, the delay of the CuSum-AC algorithm is only around $1.2$ time slots larger than that of the CuSum algorithm (the communication rate of which is 1). This shows good detection delay versus communication rate trade-off for the CuSum-AC algorithm, which will be shown further in the next example.

\begin{figure}
  \centering
  \includegraphics[scale=0.45]{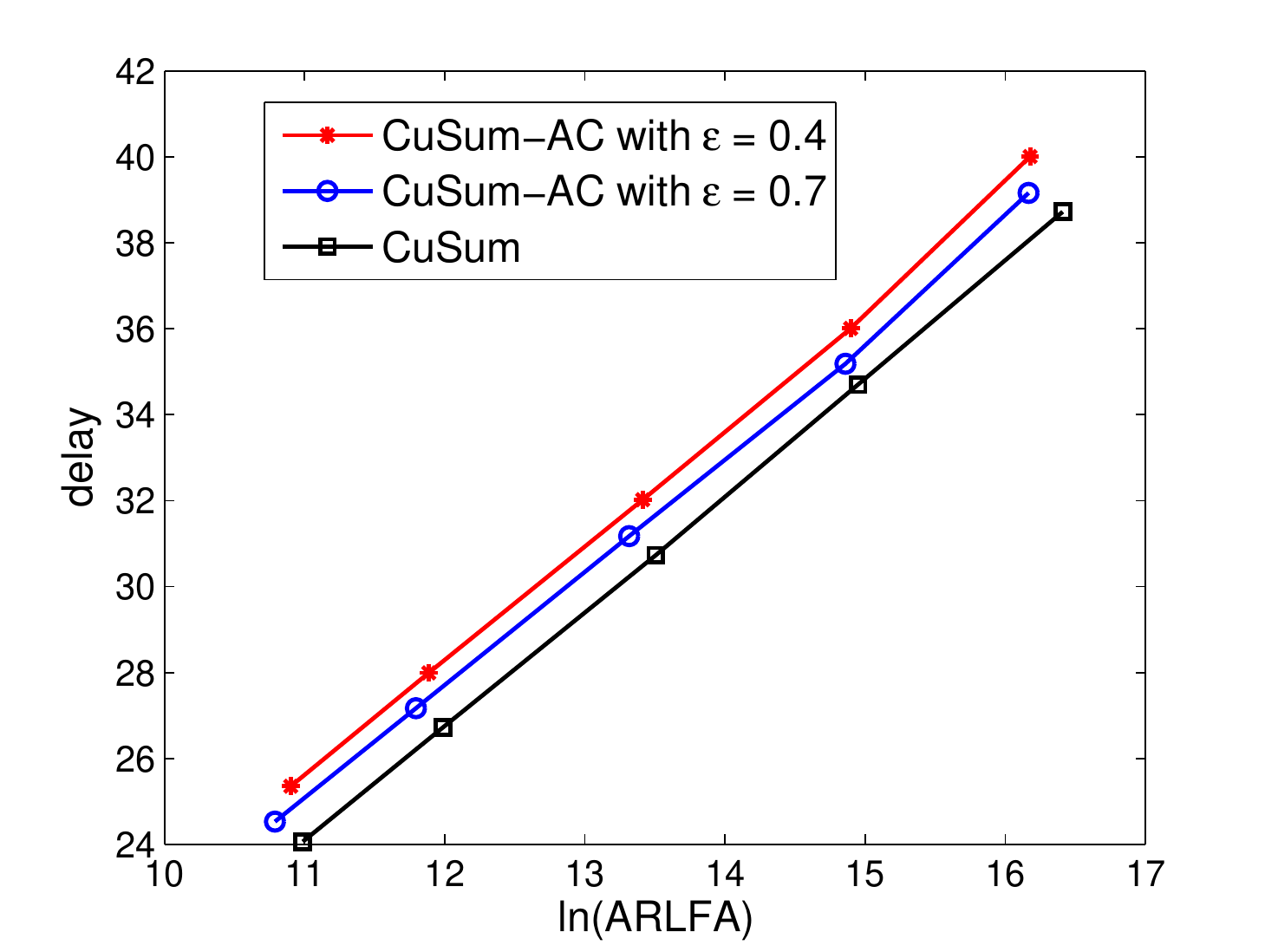}\vspace{-1mm}
  \caption{ Detection delay versus the ARLFA for the CuSum algorithm and the CuSum-AC algorithm.} \label{Fig:asymptotic}
  \vspace{-4mm}
\end{figure}

\emph{ Example 2.} We compare the CuSum-AC algorithm with the CuSum algorithm with the random transmission policy. In the random transmission policy, whether an observation is transmitted or not is determined randomly such that the communication rate is satisfied. The random transmission policy is quite simple and serves as a lower bound of the detection performance in some sense.  We plot Fig.~\ref{Fig:tradeoff2} in the following way. The ARLFA is fixed to be $10\,000$, i.e., the parameters for the algorithm ($a$ for the random transmission control policy, and $a$, $a_1$ and $\epsilon_1$ for the CuSum-AC algorithm) should be chosen such that the associated ARLFA is around $10\,000$. The parameters for the CuSum-AC algorithm are determined using the brute-force search technique.
Multiple admissible combinations of the parameters $a$, $a_1$ and $\epsilon_1$ exist, among which the one that has the smallest detection delay is used. As depicted in Fig.~\ref{Fig:tradeoff2},
the CuSum-AC algorithm significantly outperforms the CuSum algorithm with the random transmission policy, in particular when the allowed communication rate is small.
 One also should note that the CuSum-AC algorithm has quite nice performance \textit{per se}. In particular, when $\epsilon=0.1$, the detection delay of the CuSum-AC algorithm is only $7$ time slots larger than that of the CuSum algorithm (when there is no communication rate constraint, the CuSum-AC algorithm reduces to the CuSum algorithm).

\begin{figure}
  \centering
  \includegraphics[scale=0.45]{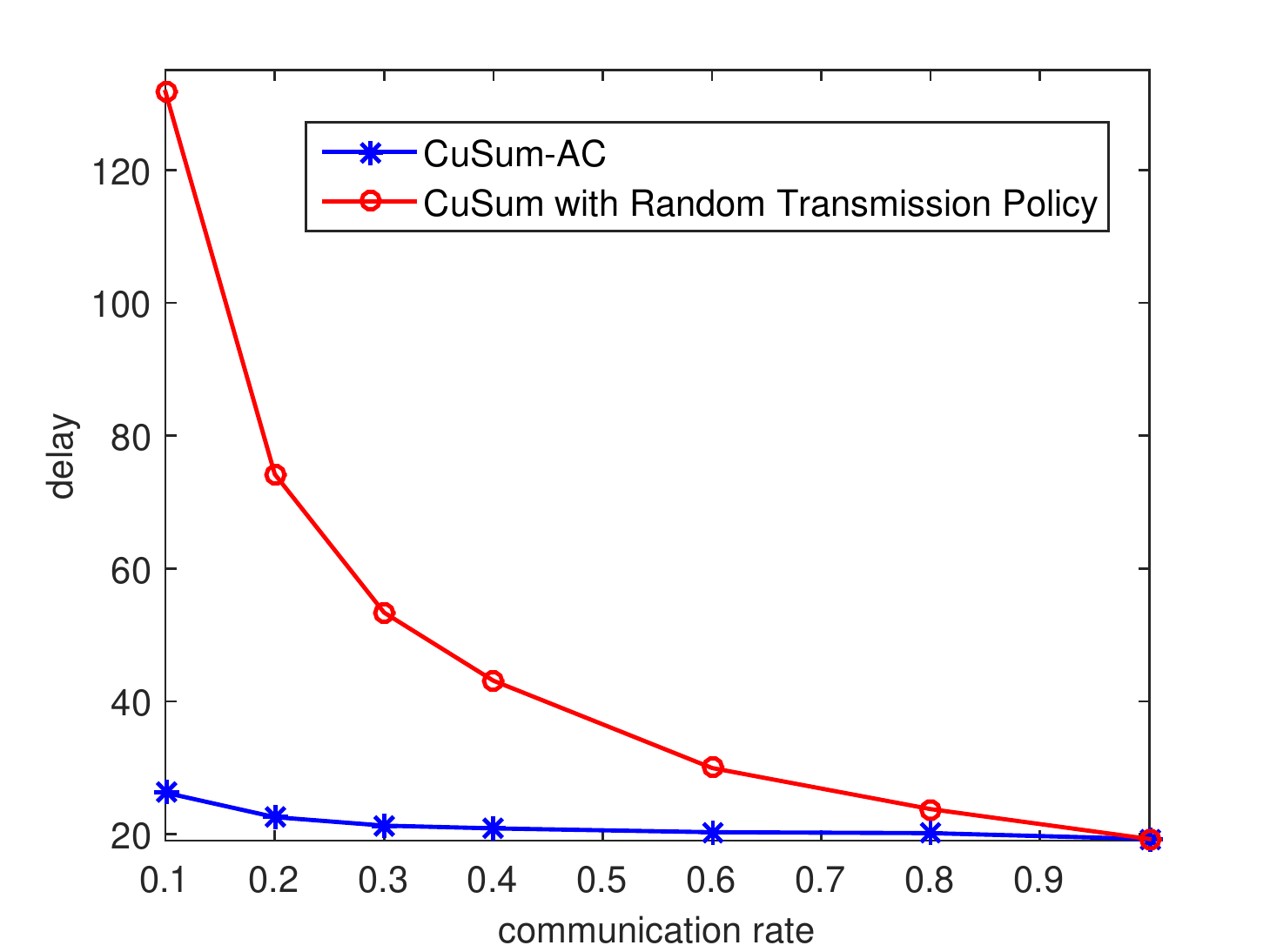}
  \caption{ Detection delay versus the communication rate for the CuSum-AC algorithm  and the CuSum algorithm with random transmission policy. } \label{Fig:tradeoff2}
  \vspace{-4mm}
\end{figure}

\section{Conclusion and Future Work} \label{Section:Conclusion}
In this paper, we have studied the problem of minimax quickest change detection with communication rate constraints.
The constraint is posed by limited energy at the remote sensors.
An extension of the classical Lorden's formulation was studied.
We proposed the CuSum-AC algorithm: the CuSum algorithm is used at the fusion center and adaptive censoring strategies are used at the sensor nodes.
The CuSum-AC algorithm was proved to be an equalizer rule, and be globally asymptotically optimal for any positive communication rate constraint, as the ARLFA goes to infinity.
The numerical simulations showed that the CuSum-AC algorithm has a better detection performance versus communication rate trade-off than the CuSum algorithm with random transmission control policy.

For the future work, there are two interesting directions. One is to explore the relationship between the detection performance when the ARLFA takes moderate values and the censoring strategy being used (in particular, to find whether there exists strictly optimal censoring strategy for every possible ARLFA). The other one is to study the problem in the multi-channel setting (the change event only affects a subset of the sensors).

\section*{Appendix A \\ Proof of Theorem \ref{Theorem:Equalizer}  }
Before proceeding, we first give a supporting lemma as follows. Recall that the stopping time $T_{c}(z,a)$ can be viewed as the first time that the CuSum-AC algorithm reaches the threshold $a$, when starting at $z$. About $T_{c}(z,a)$, the follows hold.
\begin{lemma}  \label{Lemma:EualizerRule}
For any $a$ and $0\leq z < a$,
    \begin{align}
    \mathbb{E}_{1} \left[ T_{c}(z,a) \right] \leq \mathbb{E}_{1} \left[ T_{c}(a) \right]. \label{Equation:EquaLemma}
    \end{align}
\end{lemma}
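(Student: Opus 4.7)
The plan is to split the argument by whether the initial state $z$ lies below or above the switching threshold $a_1$, exploiting the regeneration of the CuSum-AC statistic at $s = a_1$.

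When $z \in [0, a_1)$, the process $s_k$ coincides with the censored CuSum $\breve{s}_k(z)$ up to the first time it reaches $a_1$; at that instant the reset rule sets the statistic exactly to $a_1$, and by the strong Markov property the remainder of the evolution is distributed as an independent copy of the process started from $a_1$. This yields the decomposition
\[
\mathbb{E}_1 T_c(z, a) = \mathbb{E}_1 \phi(z) + \mathbb{E}_1 T_c(a_1, a), \quad \mathbb{E}_1 T_c(0, a) = \mathbb{E}_1 \phi(0) + \mathbb{E}_1 T_c(a_1, a).
\]
Since $\breve{s}_k$ obeys a max-plus recursion driven by the state-independent censored increment $\ell^{\psi^{*}(\epsilon_1)}(\gamma_k, X_k)$, a one-line pathwise induction on common observations $\{X_k\}$ gives $\breve{s}_k(z) \geq \breve{s}_k(0)$ for every $k$, and hence $\phi(z) \leq \phi(0)$ almost surely; subtracting the two displays gives the claim, and in particular $\mathbb{E}_1 T_c(a_1, a) \leq \mathbb{E}_1 T_c(0, a)$.

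For $z \in [a_1, a)$ it therefore suffices to prove $\mathbb{E}_1 T_c(z, a) \leq \mathbb{E}_1 T_c(a_1, a)$. I would couple $s_k(z)$ and $s_k(a_1)$ with the same $\{X_k\}$. As long as both remain in $[a_1, a)$, both use $\psi^{*}(1)$ and share the increment $\ell(X_k)$, so the offset $z - a_1 > 0$ is preserved (up to the max-with-zero correction). The coupling can break in two ways: (i) process $z$ first reaches $a$, in which case $s_k(a_1) \in (a_1, a)$ is still strictly below $a$, so $T_c(z, a) \leq T_c(a_1, a)$ pathwise; or (ii) process $a_1$ first drops below $a_1$ at some time $\tau$ into a state $Y_{a_1} \in [0, a_1)$. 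In the subcase of (ii) where process $z$ also exits to $s_\tau(z) = Y_{a_1} + (z - a_1) \in [0, a_1)$, the residual times are $\phi(s_\tau(z)) + T_c'(a_1, a)$ and $\phi(Y_{a_1}) + T_c'(a_1, a)$ for an independent copy $T_c'(a_1, a)$; because $s_\tau(z) > Y_{a_1}$, the monotonicity of $\phi$ from the first part supplies the needed expectation inequality.

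The main obstacle is the remaining subcase of (ii), in which $s_\tau(z) \in [a_1, a)$ after the coupling breaks: the two processes then operate in different censoring regimes (process $z$ uncensored, process $a_1$ censored), and the pathwise ordering $s_k(z) \geq s_k(a_1)$ can be violated by a single large negative observation lying in the no-send region of $\psi^{*}(\epsilon_1)$. A naive pathwise comparison therefore fails; the argument must instead work at the level of expectations, iterating the cycle decomposition into phase-1 (uncensored excursions in $[a_1, a)$) and phase-2 ($\phi$-segments in $[0, a_1)$) and invoking $\mathbb{E}_1 \phi(y) \leq \mathbb{E}_1 \phi(0)$ at each regeneration to absorb the slack until both processes rejoin at $a_1$.
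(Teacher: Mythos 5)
Your first case ($0\leq z<a_1$) matches the paper's proof exactly: the regeneration at $a_1$ gives $\mathbb{E}_{1}[T_c(z,a)]=\mathbb{E}_{1}[\phi(z)]+\mathbb{E}_{1}[T_c(a_1,a)]$, and the pathwise monotonicity $\phi(z)\leq\phi(0)$ finishes it. The second case, however, has a genuine gap, and you have correctly located it yourself: the subcase in which the two coupled processes end up in different censoring regimes. Your closing sentence --- ``iterating the cycle decomposition \ldots and invoking $\mathbb{E}_{1}\phi(y)\leq\mathbb{E}_{1}\phi(0)$ at each regeneration to absorb the slack until both processes rejoin at $a_1$'' --- is a plan, not an argument. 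The two processes need not rejoin at $a_1$ at comparable times, there is no pathwise ordering to iterate on, and nothing in your sketch bounds the expected length of the uncensored excursion of the $z$-process above $a_1$ relative to anything computable. You have also set yourself a strictly stronger target than necessary: you try to show $\mathbb{E}_{1}[T_c(z,a)]\leq\mathbb{E}_{1}[T_c(a_1,a)]$, whereas the lemma only requires $\mathbb{E}_{1}[T_c(z,a)]\leq\mathbb{E}_{1}[T_c(0,a)]=\mathbb{E}_{1}[\phi(0)]+\mathbb{E}_{1}[T_c(a_1,a)]$; that extra additive slack $\mathbb{E}_{1}[\phi(0)]$ is exactly what makes the weaker bound provable.

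The missing ingredient is an expectation-level inequality for the excursion above $a_1$. The paper conditions on whether the statistic started at $z\in[a_1,a)$ ever drops below $a_1$ before stopping, writes $\mathbb{E}_{1}[T_c(z,a)]=p(z)\,t^{\uparrow}+(1-p(z))(t^{\downarrow}_1+t^{\downarrow}_2)$, and then invokes a known renewal-type inequality (Lemma~5 of the Banerjee--Veeravalli reference): for the CuSum variant $\tilde{T}(a)$ reflected at $a_1$, every $t^{*}\geq\mathbb{E}_{1}[\tilde{T}(a)]$ satisfies $p(z)t^{\uparrow}+(1-p(z))(t^{\downarrow}_1+t^{*})\leq t^{*}$. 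Taking $t^{*}=\mathbb{E}_{1}[T_c(a)]$ (which is legitimate because $\mathbb{E}_{1}[T_c(a)]\geq\mathbb{E}_{1}[T_c(a_1,a)]\geq\mathbb{E}_{1}[\tilde T(a)]$) and bounding $t^{\downarrow}_2\leq\mathbb{E}_{1}[T_c(a)]$ via your Case~1 closes the loop. Without this inequality, or an equivalent substitute, your Case~2 does not go through.
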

\begin{proof}
The proof is done by cases.

\emph{Case $0\leq z< a_1$:}
Because of the reset action of the CuSum-AC algorithm when it crosses $a_1$ from below, for $0\leq z< a_1$, we have
\begin{align}
    \mathbb{E}_{1} \left[ T_c(z,a) \right]
    =\mathbb{E}_{1} \left[ \phi(z) \right] + \mathbb{E}_{1} \left[ T_c(a_1,a) \right].  \label{Equation:EuqLemmaSmallZParti}
\end{align}
The quantity $\mathbb{E}_{1} \left[ T_c(a_1,a) \right]$ is a common term. Thus to obtain Lemma~\ref{Lemma:EualizerRule}, we only need to prove that for any $0\leq z<a_1$
\begin{align}
\mathbb{E}_{1} \left[ \phi(z) \right] \leq \mathbb{E}_{1} \left[ \phi(0) \right], \label{Equation:EqualizerLemmaCuSumPathwise}
\end{align}
which is easily obtained by sample path arguments.


Case $a_1\leq z < a$: Starting at $z$, there are two possible ways for the CuSum-AC algorithm to reach the threshold $a$ eventually. One is that the algorithm never returns to the censored region before it stops, i.e., $s_k\geq a_1 $ along the whole horizon; we denote this event by $\mathcal{Z}^{\uparrow}_z$. The other one is that the detection statistic $s_k$ once crosses $a_1$ from up before the algorithm stops, which is denoted by $\mathcal{Z}^{\downarrow}_z$.
Let $p(z)=\mathbb{P}_{1}\left\{  \mathcal{Z}^{\uparrow}_z  \right\}.$
We then have
\begin{align*}
     &\mathbb{E}_{1} \left[ T_c(z,a) \right] \\
   =\, & p(z)\,\mathbb{E}_{1} \left[ T_c(z,a)  \big |  \mathcal{Z}^{\uparrow}_z \right] + \left( 1- p(z)\right)\,\mathbb{E}_{1} \left[T_c(z,a) \big | \mathcal{Z}^{\downarrow}_z \right]  \\
   =\,& p(z)\,\mathbb{E}_{1} \left[ T_c(z,a)  \big |  \mathcal{Z}^{\uparrow}_z \right] \\
   &+ \left( 1- p(z)\right) \left(   \mathbb{E}_{1} \left[\eta(z) \big | \mathcal{Z}^{\downarrow}_z \right]
    + \mathbb{E}_{1} \left[ T_c(x,a) ; \hat{s}_{\eta(z)}=x \big | \mathcal{Z}^{\downarrow}_z \right]    \right) \\
    \defeq \,& p(z) t^{\uparrow}+\left( 1- p(z)\right)\left( t^{\downarrow}_1+ t^{\downarrow}_2\right).
\end{align*}
The physical meaning of $t^{\downarrow}_1$ is the conditional average time it takes for the CuSum-AC algorithm to cross $a_1$ from up for the first time, when starting at $z$.

Before proceeding, we first define a stopping time $\tilde{T}(a)$ as follows. This is a stopping time for a variant of the CuSum algorithm that works in the same manner with the CuSum algorithm but starts at $a_1$ and bounded below by $a_1$:
\begin{align*}
    \tilde{T}(a)=\inf\{k:\tilde{c}_k\geq a\},  
\end{align*}
where $\tilde{c}_k$ involves by
\begin{align*}
    \tilde{c}_k&=\max \left( \tilde{c}_{k-1}+\ell(X_k), a_1 \right),\\
    \tilde{c}_0&=a_1.
\end{align*}
For the quantities $p(z), t^{\uparrow} $ and $t^{\downarrow}_1$, from Lemma $5$ of \cite{banerjee2012data}, we then have an inequality: for any $t^{*} \geq \mathbb{E}_{1}[ \tilde{T}(a)]$,
\begin{align}
p(z) t^{\uparrow}+\left( 1- p(z)\right)\left( t^{\downarrow}_1+ t^{*}\right) \leq t^{*}.  \label{Equation:PzT1Relation}
\end{align}
The reset action for the CuSum-AC algorithm when crossing $a_1$ from below together with the fact that $\{X_k\}'s$ are i.i.d. under the measure $\mathbb{P}_{1}$ yields
\begin{align*}
 \mathbb{E}_{1} \left[ T_c(a_1,a) \right] \geq \mathbb{E}_{1}[ \tilde{T}(a)].
\end{align*}
Combining~\eqref{Equation:EuqLemmaSmallZParti}, one obtains
\begin{align}
 \mathbb{E}_{1} \left[ T_c(a)\right] \geq \mathbb{E}_{1}[ \tilde{T}(a)]. \label{Equation:TVCLargerCuSumVariant}
\end{align}

We then study the quantity $t^{\downarrow}_2$.
\begin{align*}
t^{\downarrow}_2= & \mathbb{E}_{1} \left[ T_c(x,a) ; \hat{s}_{\eta(z)}=x \big | \mathcal{Z}^{\downarrow}_z \right] \\
=& \int_{0\leq x < a_1} \mathbb{E}_{1} \left[ T_c(x,a) \big | \mathcal{Z}^{\downarrow}_z; \hat{s}_{\eta(z)}=x \right] \mathrm{d}\mathbb{P}_{1}\{ \hat{s}_{\eta(z)} \leq x \big | \mathcal{Z}^{\downarrow}_z \}  \\
\stackrel{(e1)}{=}& \int_{0\leq x < a_1} \mathbb{E}_{1} \left[ T_c(x,a) \right] \mathrm{d}\mathbb{P}_{1}\{ \hat{s}_{\eta(z)} \leq x \big | \mathcal{Z}^{\downarrow}_z \} \\
\stackrel{(ie1)}{\leq}& \esssup_{0\leq x < a_1} \mathbb{E}_{1} \left[ T_c(x,a) \right] \\
\leq & \mathbb{E}_{1} \left[ T_c(a) \right].  \addtag  \label{Equation:TVCLargerT2}
\end{align*}
The equality $(e1)$ follows from the Markovity of the detection statistic for the CuSum-AC algorithm. Thus, given $\hat{s}_{\eta(z)}=x$, from the time instant $k=\eta(z)$ on, the evolution of the CuSum-AC algorithm, which starts at $z$, is exactly the same with that of a new CuSum-AC algorithm that starts at $x$. The inequality $(ie1)$ holds because of H\"{o}lder's inequality. 

Then for any $a_1\leq z < a$,
\begin{align*}
\mathbb{E}_{1} \left[ T_c(z,a)\right]
=& p(z) t^{\uparrow}+\left( 1- p(z)\right)\left( t^{\downarrow}_1+ t^{\downarrow}_2\right) \\
\stackrel{(ie1)}{\leq}& p(z) t^{\uparrow}+\left( 1- p(z)\right)\left( t^{\downarrow}_1+ \mathbb{E}_{1} \left[ T_c(a) \right] \right) \\
\stackrel{(ie2)}{\leq}& \mathbb{E}_{1} \left[ T_c(a) \right],
\end{align*}
where the inequality $(ie1)$ follows from \eqref{Equation:TVCLargerT2} and $(ie2)$ follows from \eqref{Equation:PzT1Relation} and \eqref{Equation:TVCLargerCuSumVariant}.
The proof thus is completed.
\end{proof}

We are ready to prove Theorem~\ref{Theorem:Equalizer}.
From the Markovity of the detection statistic $s_k$ for the CuSum-AC algorithm, one can see that the average detection delay is measurable with respect to $s_{\nu-1}$, i.e.,
\begin{align*}
\mathbb{E}_{\nu}[(T_c(a)-\nu+1)^+|\mathcal{I}_{\nu-1}] = &\mathbb{E}_{\nu}[(T_c(a)-\nu+1)^+|s_{\nu-1}]\\
=& \mathbb{E}_{\nu}[(T_c(s_{\nu-1},a)].
 \end{align*}
Note that under Assumption \ref{Assumption:FiniteKLDivergence}, for any censoring strategy $\psi^{*}(\epsilon_1)$, from the positive definiteness of K--L divergence \cite{cover2006elements}, one has
$\mathbb{E}_{\infty}\left[\ell^{\psi^{*}(\epsilon_1)}(\gamma_k,X_k)\right] < 0.$
It then follows that
\begin{align*}
\mathbb{P}_{\infty}\{\ell^{\psi^{*}(\epsilon_1)}(\gamma_k,X_k)<0\}\defeq p >0.
\end{align*}
and for any $\nu\geq 1$
\begin{align}
\mathbb{P}_{\infty}\{ s_{\nu-1}=0  \} \geq p ^{\nu-1} >0.  \label{Equation:PositiveS-1Equal0}
\end{align}
We then obtain
\begin{align*}
&\esssup_{\mathcal{I}_{\nu-1}} \mathbb{E}_{\nu}[(T_c(a)-\nu+1)^+|\mathcal{I}_{\nu-1}]\\
& =  \esssup_{s_{\nu-1}} \mathbb{E}_{\nu}[T_c(s_{\nu-1},a)] \\
& = \mathbb{E}_{\nu}[T_c(s_{\nu-1}=0,a)],
\end{align*}
where the last equality follows from Lemma \ref{Lemma:EualizerRule} and \eqref{Equation:PositiveS-1Equal0}. The proof thus is completed.
\section*{Appendix B \\ Proof of Lemma~\ref{Proposition:ComRateSatisfied}  }
We first give the upper bound of $r(\Psi)$ when $a, a_1$ and $\psi^{*}(\epsilon_1)$ are known. We then show that by choosing certain $\epsilon_1$ (equivalently with $\psi^{*}(\epsilon_1)$), which is independent of $a$, the upper bound can be any admissible value. The technique used is to interpret the CuSum-AC algorithm as a sequence of two-sided SPRTs as in Section~\ref{section:supportingDefinition}.

Define a random sequence $\{T_i\}$ as
\begin{align*}
T_0=& \phi(0),  \\
T_i=& T_{i-1}+ W_i,  \quad \forall i\geq1,
\end{align*}
where $W_i \defeq  W_i^{(1)}+W_i^{(2)}$. The random variables $W_i^{(1)}$ and $W_i^{(2)}$ are i.i.d. distributed with mean equal to $\mathbb{E}_{\infty}\left[ \eta(0) |\hat{s}_{\eta(0)} < a_1 \right]$ and  $\mathbb{E}_{\infty}\left[ \phi\left(\hat{s}_{\eta(0)}\right) |\hat{s}_{\eta(0)} < a_1 \right]$, respectively. Note that the distribution of $W_i$ is \emph{different} from that of $\Phi$ defined in \eqref{Equation:PhiDistribution}. The stopping time $\Phi$ is for the evolution of the CuSum-AC algorithm (which may stop at some time, i.e., $\hat{s}_{\eta(0)} \geq a$), while for the definition of the communication rate constraint \eqref{Equation:CommunicationConstraint}, we implicitly assume that the CuSum-AC algorithm never stops.

Based on $\{T_i\}$, we define a reward sequence $\{R_i\}$ as
\begin{align*}
R_0=& 0,  \\
R_i=& R_{i-1}+ W_i^{(1)}+J_i,  \quad \forall i\geq1,
\end{align*}
where $J_i \sim B(W_i^{(2)},\epsilon_1)$ is a binomial distributed random variable.

Given $\epsilon_1,a_1$ and $a$,
\begin{align*}
    r(\Psi)&\stackrel{(ie1)}{\leq} \limsup_{n\to\infty}\frac{1}{n-T_0}\mathbb{E}_{\infty}\left[\sum_{k=T_0+1}^{n}\gamma_k \bigg | T \geq n\right]\\
   &\: =  \lim_{i\to\infty} \frac{R_i}{T_i-T_0}\\
    & \stackrel{(e1)}{=} \frac{\mathbb{E}_{\infty}\left[W_1^{(1)}+J_1\right]}{\mathbb{E}_{\infty}\left[W_1\right]}  \\
    & \:= \frac{ \mathbb{E}_{\infty}\left[ \eta(0) |\hat{s}_{\eta(0)} < a_1 \right]+\epsilon_1 \mathbb{E}_{\infty}\left[ \phi\left(\hat{s}_{\eta(0)}\right) |\hat{s}_{\eta(0)}<a_1\right]}    { \mathbb{E}_{\infty}\left[ \eta(0) |\hat{s}_{\eta(0)} < a_1 \right]+ \mathbb{E}_{\infty}\left[ \phi\left(\hat{s}_{\eta(0)}\right) |\hat{s}_{\eta(0)} < a_1 \right]}\\
    & \defeq \bar{r}(\Psi),
\end{align*}
where the inequality $(ie1)$ holds because by the definition of $\phi(0)$, before the time instant $T_0$, all the observations are censored using the censoring strategy $\psi^{*}(\epsilon_1)$, the communication rate of which is $\epsilon_1$ (the lower bound of $r(\Psi)$); the equality $(e1)$ follows from the alternating renewal process theory (Page 173,~\cite{asmussen2003applied}).

Using the sample path arguments, one sees that, given $a_1$, $\mathbb{E}_{\infty}\left[ \eta(0) |\hat{s}_{\eta(0)} < a_1 \right]$ is monotonically nondecreasing with~$a$. Let
\begin{align}  \label{Equation:DurationAbove}
T_{a_1}^{\infty}\defeq\lim_{a\to+\infty} \mathbb{E}_{\infty}\left[ \eta(0) |\hat{s}_{\eta(0)} < a_1 \right].
\end{align}
Since $\mathbb{E}_{\infty}\left[\ell(X_k)\right]<0$ (by Assumption \ref{Assumption:FiniteKLDivergence}),
Corollary $2.4$ of \cite{woodroofe1982nonlinear} yields that $T_{a_1}^{\infty}$ is finite.
Note that $\bar{r}(\Psi)$ is monotonically nondecreasing with $\mathbb{E}_{\infty}\left[ \eta(0) |\hat{s}_{\eta(0)} < a_1 \right]$, then for any $a$,
\begin{align*}
\bar{r}(\Psi) \leq & \frac{T_{a_1}^{\infty} + \epsilon_1 \mathbb{E}_{\infty}\left[ \phi\left(\hat{s}_{\eta(0)}\right) |\hat{s}_{\eta(0)}<a_1\right]}    { T_{a_1}^{\infty} + \mathbb{E}_{\infty}\left[ \phi\left(\hat{s}_{\eta(0)}\right) |\hat{s}_{\eta(0)} < a_1 \right] }  \\
= & \frac{      \frac{ T_{a_1}^{\infty}}{\mathbb{E}_{\infty}\left[ \phi\left(\hat{s}_{\eta(0)}\right) |\hat{s}_{\eta(0)}<a_1\right]} +\epsilon_1  }    { \frac{T_{a_1}^{\infty}}{\mathbb{E}_{\infty}\left[ \phi\left(\hat{s}_{\eta(0)}\right) |\hat{s}_{\eta(0)} < a_1 \right]} + 1 }  \\
\defeq & \bar{\bar{r}}(\Psi).
\end{align*}
For any censoring strategy $\psi^{*}(\epsilon_1)$, $\mathbb{E}_{\infty}\left[ \ell^{\psi^{*}(\epsilon_1)}(\gamma_k,X_k)\right]\leq 0$. Furthermore, as $\epsilon_1\to 0$, $\mathbb{E}_{\infty}\left[ \ell^{\psi^{*}(\epsilon_1)}(\gamma_k,X_k)\right] \to 0$ \footnote{This approach is not necessarily monotonic. In some cases, $\mathbb{E}_{\infty}\left[ \ell^{\psi^{*}(\psi_1)}(\gamma_k,X_k)\right]$ can be zero for non-zero $\epsilon_1$'s.}. Note that when $\epsilon_1=0$, not only the mean of the log-likelihood ratio but also the random variable $\gamma_kX_k$ reduces to constant $0$. By Corollary $2.6$ of \cite{woodroofe1982nonlinear}, one sees that as $\mathbb{E}_{\infty}\left[ \ell^{\psi^{*}(\epsilon_1)}(\gamma_k,X_k)\right] \to 0$, $\mathbb{E}_{\infty}\left[ \phi\left(z\right)\right] \to \infty$ for any $0\leq z < a_1$. Note also that because of the reset action whenever the CuSum-AC algorithm crosses $a_1$ from below, $T_{a_1}^{\infty}$ is only related with the distribution of $X_k$ under $\mathbb{P}_{\infty}$, which is independent of the censoring strategy $\psi^{*}(\epsilon_1)$ (i.e., $\epsilon_1$). Given any $\epsilon$, one thus can find a non-empty set $\mathcal{E}(a_1,\epsilon)$ that is independent of $a$ (i.e., independent of the distribution of $\{\hat{s}_{\eta(0)}|\hat{s}_{\eta(0)} < a_1\}$) such that for any censoring strategy $\psi^{*}(\epsilon_1)$ with $\epsilon_1\in \mathcal{E}(a_1,\epsilon)$,
$
\bar{\bar{r}}(\Psi)\leq \epsilon.
$
The proof thus is completed.
\section*{Appendix C \\ Proof of Theorem \ref{Theorem:AsymOpt2Levels}  }
By the definition of $\mathcal{E}^{*}(a_1,\epsilon)$, the CuSum algorithm satisfies the communication rate constraint~\eqref{Equantion:ProCons2}.
Note that the CuSum algorithm is strictly optimal for Problem~\ref{Problem:Lorden} when $\epsilon=1$. We prove Theorem~\ref{Theorem:AsymOpt2Levels} by relating the CuSum-AC algorithm to the CuSum algorithm.

\emph{Step $1$}.  We first prove that the ARLFA of the CuSum-AC algorithm that uses the censoring strategy defined in Theorem~\ref{Theorem:AsymOpt2Levels} is \emph{always larger} than that of the CuSum algorithm. This is due to the definition of $\mathcal{E}^{*}(a_1,\epsilon)$ in~\eqref{Equation:OptimalEpsilonSetDefinition}, by which $\epsilon_1$ (equivalent to $\psi^*(\epsilon_1)$) is appropriately chosen. Note that the ARLFA can be made arbitrarily large by adjusting $\epsilon_1$.
To this end, we define a stopping time as follows:
\begin{align*}
    \hat{T}(a)=\inf\{k:\hat{c}_k\geq a\},
\end{align*}
where the detection statistic $\hat{c}_k$ evolves by
\begin{align*}
    \hat{c}'_k&=\left( \hat{c}_{k-1}+\ell(X_k) \right)^{+},\\
    \hat{c}_k & = \left\{
        \begin{array}{ll}
            a_1, & \text{if} \quad \hat{c}'_k\geq a_1\:\text{and}\: \hat{c}_{k-1}<a_1,\\
            0, & \text{if} \quad \hat{c}'_k < a_1\:\text{and}\: \hat{c}_{k-1} \geq a_1,\\
            \hat{c}'_k, & \text{otherwise},
        \end{array}
    \right. \\
    \hat{c}_0&=0.
\end{align*}
The difference between $\hat{c}_k$ and $c_k$ in \eqref{Equation:CuSumStoppingTime} for the CuSum algorithm is that $\hat{c}_k$ is reset to be $a_1$ ($0$) whenever it crosses $a_1$ from below (up). Using the sample path arguments, one can see that for any $a$,
\begin{align}
    \mathbb{E}_{\infty}\left[  \hat{T}(a)  \right] \geq \mathbb{E}_{\infty}\left[  T(a)  \right]. \label{Equation:ResetCuSumBigger}
\end{align}
Let $N(k)$ be the  number of  $\hat{c}_k$ crossing $a_1$ from below by time instant $k$, i.e,
\begin{align*}
N(k) & = \left\{
        \begin{array}{ll}
            N(k-1)+1, & \text{if} \quad \hat{c}_k\geq a_1\:\text{and}\: \hat{c}_{k-1}<a_1,\\
            N(k-1), & \text{otherwise},
        \end{array}
    \right.
\end{align*}
with $N(0)=0$. Wald's identity \cite{siegmund1985sequential} yields
\begin{align*}
&\mathbb{E}_{\infty}\left[ \hat{T}(a)  \right] \\
=& \left( \mathbb{E}_{\infty}\left[  \eta(0) \right]+ T(a_1)\mathbb{P}_{\infty}\{\hat{s}_{\eta(0)} < a_1\}  \right)
\mathbb{E}_{\infty} \left[N\left( \hat{T}(a)\right)  \right].  \addtag  \label{Equation:ResetCuSumRepresentation}
\end{align*}

Let
$
    T^{\diamond}_c(a)=\inf\{k:s^{\diamond}_k\geq a\},
$
where $s^{\diamond}_k$ is evolved in the same manner with $s_k$ in~\eqref{Equation:2LevelsUpdate} for the stopping time $T_c(a)$ except for the starting point. The detection statistic $s_k$ starts at $s_0=0$, while $s^{\diamond}_k$ starts with a random variable $s^{\diamond}_0=x$ and the distribution of $x$ is the same with that of the random variable $\{\eta(0)|\hat{s}_{\eta(0)} < a_1\}$. Using the sample path arguments, one sees that
\begin{align}
    \mathbb{E}_{\infty}\left[  T_c(a)  \right] \geq \mathbb{E}_{\infty}\left[  T^{\diamond}_c(a)  \right].
    \label{Equation:RandomStartTVCSmaller}
\end{align}
Let $N^{\diamond}(k)$ be the number of  $s^{\diamond}_k$ crossing $a_1$ from below by time instant $k$, which is defined in the same manner with $N(k)$. Also by Wald's identity, one has
\begin{align*}
&\mathbb{E}_{\infty}\left[ T^{\diamond}_c(a)   \right] \\
= &\left( \mathbb{E}_{\infty}\left[  \eta(0) \right]
  + \mathbb{E}_{\infty}\left[ \phi\left(\hat{s}_{\eta(0)}\right) |\hat{s}_{\eta(0)} < a_1 \right]\mathbb{P}_{\infty}\{\hat{s}_{\eta(0)} < a_1\}  \right) \\
&\quad\mathbb{E}_{\infty} \left[N^{\diamond} \left( T^{\diamond}_c(a) \right)  \right].  \addtag \label{Equation:RandomStartTVCRepresentation}
\end{align*}
Both $\hat{c}_k$ and $s^{\diamond}_k$ are reset to be $a_1$ when they cross $a_1$ from below, we thus obtain the following:
\begin{align*}
\mathbb{E}_{\infty} \left[N\left( \hat{T}(a)\right)  \right] =& \mathbb{E}_{\infty} \left[N^{\diamond} \left( T^{\diamond}_c(a) \right)  \right]  \addtag \label{Equation:EqualNumnber}\\
=& \frac{1}{1-\mathbb{P}_{\infty}\{\hat{s}_{\eta(0)} < a_1\}},
\end{align*}
where the last equality follows because both $N(k)$ and $N^{\diamond}(k)$ are geometrically distributed.

Combining \eqref{Equation:OptimalEpsilonSetDefinition}, \eqref{Equation:ResetCuSumBigger}, \eqref{Equation:ResetCuSumRepresentation}, \eqref{Equation:RandomStartTVCSmaller}, \eqref{Equation:RandomStartTVCRepresentation} and \eqref{Equation:EqualNumnber}, one can see that for any $\epsilon$, when the censoring strategy $\psi^{*}(\epsilon_1)$ with $\epsilon_1\in \mathcal{E}^{*}(a_1,\epsilon)$ is used,
\begin{align}
\mathbb{E}_{\infty}\left[  T_c(a)  \right] \geq \mathbb{E}_{\infty}\left[  T(a)  \right].  \label{Equation:ARLTVCBigger}
\end{align}
Then by the established performance results of the CuSum algorithm~\cite{poor2009quickest}, the CuSum-AC algorithm satisfies the ARLFA constraint~\eqref{Equantion:ProCons1}.

\emph{Step $2$}. We show that for any $\epsilon$, as $a\to\infty$,
\begin{align*}
\mathbb{E}_{1}\left[  T_c(a)  \right] = \mathbb{E}_{1}\left[  T(a)  \right] \left(  1+ \smallO 1  \right).
\end{align*}
The intuition is as follows. Since for any finite $a_1$ and $\epsilon_1>0$, the duration of $s_k$ staying below $a_1$ (when censored observations are used) is finite, then as $a\to\infty$, the duration of $s_k$ being above $s_k$ (when raw observations) dominates. The asymptotic first-order behavior of the detection delay of the CuSum-AC algorithms thus resembles that of classical CuSum algorithm.

Let
\begin{align*}
T_1 = & \sum_{k=1}^{T_c(a)} \mathbf{1}_{ \{ s_k \geq a_1\} }, \\
T_2 = & T_c(a)-T_1.
\end{align*}
Recall that $s_k$ is the detection statistic for the CuSum-AC algorithm.
The quantity $T_1$ ($T_2$) can be viewed as the duration that $s_k$ stays above (below) $a_1$. Following similar arguments in \emph{Step $1$}, one obtains that
\begin{align*}
\mathbb{E}_{1}[T_1] = & \mathbb{E}_{1}\left[  \eta(0) \right] \frac{1}{1-\mathbb{P}_{1}\{\hat{s}_{\eta(0)} < a_1\}}, \\
\mathbb{E}_{1}[T_2] \leq & \mathbb{E}_{1}\left[ \phi(0) \right]\frac{1}{1-\mathbb{P}_{1}\{\hat{s}_{\eta(0)} < a_1\}}.
\end{align*}

Since $\mathbf{I}(f_1||f_0)>0$, by Corollary $2.4$ of \cite{woodroofe1982nonlinear}, one obtains that as $a\to\infty$,
$\mathbb{E}_{1}\left[  \eta(0) \right] \to \infty$.
Note that for any $\epsilon_1 >0$, $\mathbf{I}^{\psi^{*}(\epsilon_1)}(f_{1}||f_0)>0$. From the established performance analysis technique for the CuSum algorithm (e.g., P. 142 of \cite{poor2009quickest}), one sees that
$\mathbb{E}_{1}\left[ \phi(0) \right] < \infty.$
Note that $\mathbb{E}_{1}\left[ \phi(0) \right]$ is only related to $a_1$ and $\psi^{*}(\epsilon_1)$. By the definition of $\mathcal{E}^{*}(a_1,\epsilon)$, $\psi^{*}(\epsilon_1)$ is independent of $a$. The following thus can be obtained:
\begin{align*}
\frac{\mathbb{E}_{1}[T_2]}{\mathbb{E}_{1}[T_1]} \leq \frac{\mathbb{E}_{1}\left[ \phi(0) \right]}{\mathbb{E}_{1}\left[  \eta(0) \right]} \to 0, \quad \text{as} \: a\to\infty.
\end{align*}
Then $a\to \infty$,
\begin{align}
\mathbb{E}_{1}\left[  T_c(a)  \right] = \mathbb{E}_{1}[T_1] (\left(  1+ \smallO 1  \right)). \label{Equation:AsympE_11}
\end{align}

Because of the reset action when $s_k$ crosses $a_1$ from below, the following holds:
\begin{align*}
\mathbb{E}_{1}[T_1] \leq & \mathbb{E}_{1}\left[  T(a-a_1)  \right] \\
=& \frac{\ln (a-a_1)}{\mathbf{I}(f_1||f_0)}(  1+ \smallO 1 ), \quad \text{as} \: a\to\infty. \addtag \label{Equation:AsympE_12}
\end{align*}
Note that as $a\to\infty$
\begin{align}
\mathbb{E}_{1} \left[ T(a) \right] = \frac{\ln (a)}{\mathbf{I}(f_1||f_0)}(  1+ \smallO 1 ).
\label{Equation:AsympE_13}
\end{align}
Combining \eqref{Equation:AsympE_11}, \eqref{Equation:AsympE_12} and \eqref{Equation:AsympE_13}, one obtains that
for any $\epsilon$, as $a\to\infty$,
\begin{align*}
\mathbb{E}_{1}\left[  T_c(a)  \right] &\leq \mathbb{E}_{1}\left[  T(a)  \right] \left(  1+ \smallO 1  \right) \\
& \stackrel{(e1)}{=} \mathbb{E}_{1}\left[  T(a)  \right] \left(  1+ \smallO 1  \right),
\addtag \label{Equation:DelayAsymptoApproach}
\end{align*}
where $(e1)$ holds because given \eqref{Equation:ARLTVCBigger}, $\mathbb{E}_{1}\left[  T(a)  \right]$ is the lower bound of $\mathbb{E}_{1}\left[  T_c(a)  \right]$ as $a\to\infty$.

\emph{Step $3$.} By Theorem \ref{Theorem:Equalizer},
\begin{align*}
d_L(T_c(a),\Psi) = \mathbb{E}_{1}\left[  T_c(a)  \right].
\end{align*}
Note that $T(a)$ is asymptotically optimal for Problem~\ref{Problem:Lorden}, as $\zeta\to\infty$ (i.e., $a\to\infty$), when $\epsilon=1$. Combining \eqref{Equation:ARLTVCBigger} and \eqref{Equation:DelayAsymptoApproach}, one obtains that the CuSum-AC algorithm is asymptotically optimal. The proof thus is complete.

\bibliographystyle{IEEETran}
\bibliography{xq_reference}

\end{document}